\let\fns\footnotesize
\def\thewatermark{\begin{tabular}{c}THIS SECTION\\NOT FOR PUBLICATION\end{tabular}}
\def\watermarkangle{45}
\def\watermarklightness{0.9}
\def\watermarkfontsize{2cm}
\def\watermark{
    \AddToShipoutPicture{%
            \setlength{\@tempdimb}{.5\paperwidth}%
            \setlength{\@tempdimc}{.5\paperheight}%
            \setlength{\unitlength}{1pt}%
            \put(\strip@pt\@tempdimb,\strip@pt\@tempdimc){%
        \makebox(-20,100){\hss\rotatebox{\watermarkangle}{\textcolor[gray]{\watermarklightness}%
                 {\fontsize{\watermarkfontsize}{\watermarkfontsize}\selectfont{\thewatermark}}}\hss}%
        }%
    }%
}
\def\fract#1#2[#3]{
\frac{
\mbox{\fns$#1$}
}{
\mbox{\fns$#2$}
}
\mbox{\fns[$#3$]}
}
\def\pre[#1]#2{[#1]\mathop{{#2}}}
\def\EOP{\fbox{\vbox to 1ex{}~}}
\def\to{{\rightarrow\kern0.5pt}}
\def\E{{\cal E}}
\def\R{{\bf R}}
\def\G{{\bf G}}
\def\B{{\bf B}}
\def\N{{\bf N}}
\def\e{{\bf E}}
\def\code#1{{\bf#1}}
\newenvironment{example*}{%
  \renewcommand{\theexample}{\it Example~\arabic{example}.}
  \let\oldlabelcommand=\label
  \def\label{\customlabel{\theexample}}
  \let\oldexample=\example
  \let\oldendexample=\endexample
  \renewenvironment{example}{%
    \item[\hskip\labelsep\bfseries\refstepcounter{example}\theexample]%
  }{%
  }%
  \begin{trivlist}
  \item[\hskip\labelsep{\bfseries Examples}]%
}{%
  \end{trivlist}
  \let\label=\oldlabelcommand
  \let\example=\oldexample
  \let\endexample=\oldendexample
}
\newenvironment{definition*}{%
  \renewcommand{\thedefinition}{\Alph{section}D\arabic{definition}}
  \let\oldlabelcommand=\label
  \def\label{\customlabel{\thedefinition}}
  \let\olddefinition=\definition
  \let\oldenddefinition=\enddefinition
  \renewenvironment{definition}{%
    \item[\hskip\labelsep\bfseries\refstepcounter{definition}\thedefinition]%
  }{%
  }%
  \begin{trivlist}
  \item[\hskip\labelsep{\bfseries Definitions}]%
}{%
  \end{trivlist}
  \let\label=\oldlabelcommand
  \let\definition=\olddefinition
  \let\enddefinition=\oldenddefinition
}
\renewcommand{\implies}{~\mathop{\Rightarrow}~}
\renewcommand{\iff}{~\mathop{\Leftrightarrow}~}
\def\p#1{\mathrel{\ooalign{\hfil$\mapstochar\mkern 5mu$\hfil\cr$#1$}}}
\def \pfun{\p\rightarrow}
\newenvironment{textbox}[1][]{%
  \def\@captype{box}%
    \par\nobreak\vspace{-2ex}\begin{center}\nobreak\begin{framebox}%
}{%
  \end{framebox}\par\nobreak\end{center}\vspace{-0.5ex}%
}
\newcounter{textbox}
\def\newdef#1#2{%
    \expandafter\@ifdefinable\csname #1\endcsname
        {\@definecounter{#1}%
         \expandafter\xdef\csname the#1\endcsname{\@thmcounter{#1}}%
         \global\@namedef{#1}{\@defthm{#1}{#2}}%
         \global\@namedef{end#1}{\@endtheorem}%
    }%
}
\def\@defthm#1#2{%
    \refstepcounter{#1}%
    \@ifnextchar[{\@ydefthm{#1}{#2}}{\@xdefthm{#1}{#2}}%
}
\def\@xdefthm#1#2{%
    \@begindef{#2}{\csname the#1\endcsname}%
    \ignorespaces
}
\def\@ydefthm#1#2[#3]{%
    \trivlist
    \item[%
        \hskip 10\p@
        \hskip \labelsep
        {\it #2%
         \saveb@x\@tempboxa{#3}
         \ifdim \wd\@tempboxa>\z@
            \ \box\@tempboxa
         \fi.%
        }]%
    \ignorespaces
}
\def\@begindef#1#2{%
    \trivlist
    \item[%
        \hskip 10\p@
        \hskip \labelsep
        {\it #1\ \rm #2.}%
    ]%
}
\renewcommand{\appendix}{\par
\section*{APPENDIX -- {\em NOT FOR PUBLICATION\/}}
\setcounter{section}{0}
 \setcounter{subsection}{0}
  \def\thesection{\Alph{section}} }
\def\citeBB#1{\ifnum#1=11 \expandafter\mycite{BB11A}\else\ifnum#1=94 \expandafter\mycite{BB94A}\else\ifnum#1=13 \expandafter\mycite{BB13A}\fi\fi\fi}
\def\citeBB#1{\ifnum#1=11 \expandafter\mycite{BB11}\else\ifnum#1=94 \expandafter\mycite{BB94}\else\ifnum#1=13 \expandafter\mycite{BB13}\fi\fi\fi}
\def\andciteBBtwo#1,{BB#1,\andciteBBthree}
\def\andciteBBthree#1,{BB#1}
\def\mycite#1{\cite{#1}}
\def\paragraph#1{{\bf#1}}
\def\authorA{Peter~T.~Breuer}
\def\authorAemail{ptb@cs.bham.ac.uk}
\def\authorAaddr{Department of Computer Science, University of Birmingham, UK}
\def\authorB{Simon~J.~Pickin}
\def\authorBemail{spickin@ucm.es}
\def\authorBaddr{Facultad de Inform\'atica, Universidad Complutense de Madrid}
\def\authorA{Author~1}
\def\authorAemail{Author 1 email}
\def\authorAaddr{Author 1 address}
\def\authorB{Author~2}
\def\authorBemail{Author 2 email}
\def\authorBaddr{Author 2 address}
\def\nedots{{.^{.^.}\kern-5pt}}
\begin{document}

\edef\marginnotetextwidth{\the\textwidth}

\title{\kern-30pt Soundness and Completeness of the NRB Verification
Logic\kern-30pt}

\author{
\authorA\inst{1}
\and
\authorB\inst{2}
}
\institute{
  \authorAaddr\\
  \email{\authorAemail}
\and
  \authorBaddr\\
  \email{\authorBemail}
}

\maketitle

\begin{abstract}
This short paper gives a model for and a proof of completeness of
the NRB verification logic for deterministic imperative programs, the
logic having been used in the past as the basis for automated
semantic checks of large, fast-changing, open source C code archives,
such as that of the Linux kernel source. The model is a coloured
state transitions model that approximates from above the set of
transitions possible for a program. Correspondingly, the logic
catches all traces that may trigger a particular defect at a given
point in the program, but may also flag false positives.
\end{abstract}


\pagestyle{plain}

\section{Introduction}
\label{sec:Introduction}

NRB program logic was first introduced in 2004 \cite{RST:2004} as
the theory supporting an automated semantic analysis suite
\cite{iCCS2006} targeting the C code of the Linux kernel.  The analyses
performed with this kind of program logic and automatic tools are
typically much more approximate than that provided by more interactive
or heavyweight techniques such as theorem-proving and model-checking
\cite{Clarke1}, respectively, but the NRB combination has proved
capable of rapidly scanning millions of lines of C code and detecting
deadlocks scattered at one per million lines of code \cite{SEW30}.  A
rough synopsis of the characteristics of the logic or an approach using
the logic is that it is precise in terms of accurately following the
often complex flow of control and sequence of events in an imperative
language, but not very accurate at following data values.  That is fine
for a target language like C \cite{C89,C99}, where static analysis
cannot reasonably hope to follow all data values accurately because of
the profligate use of indirection through pointers in a typical program
(a pointer may access any part of memory, in principle, hence writing
through a pointer might `magically' change any value) and the NRB
logic was designed to work around that problem by focussing instead on
information derived from sequences of events.

NRB is a logic with modal operators.  The modalities do not 
denote a full range of actions as in Dynamic Logic ~\cite{Harel:2000},
but rather only the very particular action of
the final exit from a code fragment being via a
\code{return}, \code{break}, or \code{goto}.
The logic is also configurable in detail to support the
code abstractions  that are of interest in different analyses;
detecting the freeing of a record in memory while it may
still be referenced requires an abstraction that counts the possible
reference holders, for example, not the value currently in the second 
field from the right.  The technique became known as `symbolic
approximation' \cite{SA,ISOLA} because of the foundation in symbolic logic 
and because the analysis is guaranteed to be on the alarmist side
(`approximate from above'); the analysis does not miss bugs in 
code, but does report false positives.
In spite of a few years' pedigree behind it now, a
foundational semantics for the logic has only just been published
\cite{SCP} (as an Appendix to the main text), and this article
aims to provide a yet simpler semantics for the logic and also a
completeness result, with the aim of consolidating the technique's
bona fides.

Interestingly, the formal guarantee (`never miss, over-report')
provided by NRB and the symbolic approximation technique is said not to
be desirable in the commercial context by the very practical authors of
the Coverity analysis tool \cite{Coverity,Bessey:2010}, which also has
been used for static analysis of the Linux kernel and many very large C
code projects.  Allegedly, in the commercial arena, understandability
of reports is crucial, not the guarantee that no bugs will be missed.
The Coverity authors say that commercial clients tend to dismiss any
reports that they do not understand, turning a deaf ear to explanations.
However, the reports produced by our tools have always been filtered
before presentation, so only the alarms that cannot
be dismissed as false positives are seen.

The layout of this paper is as follows. In Section~\ref{sec:A1} a model
of programs as sets of `coloured' transitions between states is
introduced, and the constructs of a generic imperative language are
expressed in those terms.  It is shown that the constructs obey certain
algebraic laws, which soundly implement the established deduction rules
of NRB logic. Section~\ref{sec:A2} shows that the logic is complete for
deterministic programs, in that anything that is true in the model
introduced in Section~\ref{sec:A1} can be proved using the formal rules 
of the NRB logic.

Since the model contains at least as many state transitions as occur in
reality, `soundness' of the NRB logic means that it may construct
false alarms for when a particular condition may be breached at some
particular point in a program, but that it may not miss any real alarms.
`Completeness' means that the logic flags no more false alarms than
are already to be predicted from the model, so if the model says that
there ought to be no alarms at all (which means that there really
are no alarms), then the logic can prove that.  Thus, reasoning
symbolically is not in principle an approximation here; it is not
necessary to laboriously construct and examine the complete graph of
modelled state transitions in order to be able to give a program a `clean
bill of health' with reference to some potential defect, because the
logic can always do the job as well.

\section{Semantic Model}
\label{sec:A1}
\begin{table}[t]
\caption{NRB deduction rules for triples of assertions and
programs. Unless explicitly noted, assumptions
${\G}_l p_l$ at left are passed down unaltered from top to bottom of each
rule.
We let $\E{}_1$ stand for
any of ${\R}$, ${\B}$, ${\G}_l$, ${\e}_k$; $\E_2$ any of ${\R}$, ${\G}_l$, ${\e}_k$;
$\E_3$ any of ${\R}$.  ${\G}_{l'}$ for $l'\ne l$, ${\e}_k$;
$\E_4$ any of ${\R}$.  ${\G}_l$, ${\e}_{k'}$ for $k'\ne k$; $[h]$ the body of
the subroutine named $h$.
}
\label{tab:rules}
\[
\begin{array}{c}
\frac{
\triangleright~\{p\}\, P\, \{{\N}q\lor \E_{1}x\}
\quad
\triangleright~\{q\}\, Q\, \{{\N}r\lor \E_{1}x\}
}{
\triangleright~\{p\}\, P\,{;}\,Q\,\, \{{\N}r\lor \E_{1}x\}
}\mbox{\footnotesize[seq]}
\qquad
\frac{
\triangleright~\{p\}\, P\, \{\B q\lor {\N}p\lor \E_{2}x\}
}{
\triangleright~\{p\}\, \code{do} \,\,P\, \{{\N}q\lor \E_{2}x\}
}\mbox{\footnotesize[do]}
\\[2ex]
\frac{
}{
\triangleright~\{p\}\, \code{skip}\, \{{\N}\,p\}
}\mbox{\footnotesize[skp]}
\qquad
\frac{
}{
\triangleright~\{p\}\, \code{return}\, \{{\R}\,p\}
}\mbox{\footnotesize[ret]}
\\[2ex]
\frac{
}{
\triangleright~\{p\}\, \code{break}\,\,\, \{{\B}\,p\}
}\mbox{\footnotesize[brk]}
\quad
\mbox{\footnotesize[$p\to p_l$]}
\frac{
}{
{\G}_l\,p_l\,\triangleright~\{p\}\, \code{goto}\,\,l\, \{{\G}_l\,p\}
}\mbox{\footnotesize[go]}
\\[2ex]
\frac{
}{
\triangleright~\{p\}\, \code{throw}\,\,k\, \{{\e}_k\,p\}
}\mbox{\footnotesize[throw]}
\quad
\frac{
}{
\triangleright~ \{q[e/x] \}\,\, x{=}e\,\, \{{\N}q\}
}\mbox{\footnotesize[let]}
\\[2ex]
\frac{
\triangleright~ \{q\land p\}\, P\, \{r\}
}{
\triangleright~\{p\}\, q\,{\to}P\,\, \{r\}
}\mbox{\footnotesize[grd]}
\quad
\frac{
\triangleright~ \{p \}\, P\,\, \{q\}
\quad
\triangleright~ \{p \}\, Q\,\, \{q\}
}{
\triangleright~ \{p \}\, P\,{\shortmid}\,Q\,\, \{q\}
}\mbox{\footnotesize[dsj]}
\\[2ex]
\mbox{\footnotesize$[\N p_l\to q]$}
\frac{
{\G}_l\,p_l\,\,\triangleright~ \{p \}~ P~ \{q\}
}{
{\G}_l\,p_l\,\,\triangleright~ \{p \}~ P:l~ \{q\}
}\mbox{\footnotesize[frm]}
\qquad
\frac{
{\G}_l\,p_l\,\,\triangleright~ \{p \}~ P~ \{{\G}_l p_l \lor {\N}q \lor \E_{3}x\}
}{
\triangleright~ \{p \}~ \code{label}~l.P~ \{{\N}q\lor \E_{3}x\}
}\mbox{\footnotesize[lbl]}
\\[2ex]
\frac{
\triangleright~ \{p \}~ [h]~ \{\R r \lor {\e}_kx_k\}
}{
{\G}_l p_l\,\triangleright~ \{p \}~ \code{call}~h~ \{{\N}r\lor {\e}_kx_k\}
}\mbox{\footnotesize[sub]}
\quad
\frac{
\triangleright~ \{p \}~ P~ \{{\N}r\lor {\e}_kq\lor\E_4 x\}
\quad
\triangleright~ \{q \}~ Q~ \{{\N}r\lor {\e}_k x_k\lor\E_4 x\}
}{
\triangleright~ \{p \}~ \code{try}~P~\code{catch}(k)~Q~\{{\N}r\lor {\e}_k x_k\lor\E_4x \}
}\mbox{\footnotesize[try]}
\\[2ex]
\frac{
\triangleright~\{p_i\}~P~\{q\}
}{
\triangleright~\{ {\lor}\kern-3pt{\lor} p_i\}~P~\{ q\}
}
\qquad
\frac{
\triangleright~\{p\}~P~\{q_i\}
}{
\triangleright~\{ p\}~P~\{ {\land}\kern-3pt{\land} q_i\}
}
\qquad
\frac{
{\G}_l\,p_{li}\,\triangleright~\{p\}~P~\{q\}
}{
{\lor}\kern-3pt{\lor} {\G}_l\,p_{li}\,\triangleright~\{ p\}~P~\{ q\}
}
\\[2ex]
\mbox{\footnotesize[$p'\to p, q\to q', p_l'\to p_l|{\G}_lq'\to {\G}_lp'_l$]}
\frac{
{\G}_l\,p_l\,\triangleright~\{ p\}~P~\{ q\}
}{
{\G}_l\,p_l'\,\triangleright~\{ p'\}~P~\{ q'\}
}
\end{array}
\]
\label{tab:NRBG}
\end{table}
This section sets out a semantic model for the full NRBG(E) 
logic (`NRB' for short) shown in Table~\ref{tab:NRBG}.  The `NRBG' part
stands for `normal, return, break, goto', and the `E' part treats
exceptions (catch/throw in Java, setjmp/longjmp in C), aiming at a
complete treatment of classical imperative languages.  This semantics
simplifies a {\em trace model} presented in the Appendix to \cite{SCP},
substituting traces there for state transitions here.

A natural model of a program is as a relation of type
$\mathds{P}(S\times S)$, expressing possible changes in a state of type $S$
as a set of pairs of initial and final states.  We shall add
a {\em colour} to this picture. The `colour' shows if the program has
run {\em normally} through to the end (colour `$\N$') or has terminated
early via a \code{return} (colour `$\R$'), \code{break} (colour `$\B$'),
\code{goto} (colour `$\G_l$' for some label $l$) or an exception (colour
`$\e_k$' for some exception kind $k$).  The aim is to document
precisely the control flow in the program. In this picture, a
deterministic program may be modelled as a set of `coloured'
transitions of type
\[
   \mathds{P}(S\times \star\times S)
\]
where the colours $\star$ are a disjoint union
\[
   \star = \{\N\}
       \sqcup
       \{\R\}
       \sqcup
       \{\B\}
       \sqcup
       \{\G_l\,|\,l\in L\}
       \sqcup
       \{\e_k\,|\,k\in K\}
\]
and $L$ is the set of possible \code{goto} labels and $K$ the set of possible
exception kinds.

The programs we consider are in fact deterministic, but we will use the
general setting.  Where the relation is not
defined on some initial state $s$, we understand that the
initial state $s$ leads to the program getting hung up in an infinite
loop, instead of terminating. Relations representing deterministic programs
thus have a set of images for any given initial state that is either of
size zero (`hangs') or one (`terminates').
Only paths through the program that do not `hang' in an infinite loop
are of interest to us, and what the NRB logic will say about a program
at some point will be true only supposing control reaches that point,
which it may never do.
%


Programs are put together in sequence  with the
second program accepting as inputs only the states that the first program ends `normally'
with. Otherwise the state with which the first program exited
abnormally is the final outcome. That is,
\begin{align*}
   \llbracket P;Q\rrbracket  &= \{ s_0\mathop{\mapsto}\limits^\iota s_1
   \in \llbracket P\rrbracket ~|~ \iota\ne\N\}\\
                            &\cup \,\{s_0\mathop{\mapsto}\limits^\iota
                            s_2 \mid s_1\mathop{\mapsto}\limits^\iota
                            s_2 \in \llbracket Q\rrbracket ,~
                            s_0\mathop{\mapsto}\limits^\N s_1\in \llbracket P\rrbracket \}  
\end{align*}
This statement is not complete, however, because abnormal exits with a
\code{goto} from $P$ may still re-enter in $Q$ if
the \code{goto} label is in $Q$, and proceed.
We postpone consideration of this eventuality by predicating the
model with the sets of states $g_l$ {\em hypothesised} as
being fed in at the label $l$ in the code. The model of
$P$ and $Q$ with these sets as assumptions  produce outputs that
take account of these putative extra  inputs at label $l$:
\begin{align*}
   \llbracket P;Q\rrbracket_g  &= \{ s_0\mathop{\mapsto}\limits^\iota s_1
   \in \llbracket P\rrbracket_g ~|~ \iota\ne\N\}\\
                            &\cup \,\{s_0\mathop{\mapsto}\limits^\iota
                            s_2 \mid s_1\mathop{\mapsto}\limits^\iota
                            s_2 \in \llbracket Q\rrbracket_g ,~
                            s_0\mathop{\mapsto}\limits^\N s_1\in
                            \llbracket P\rrbracket_g \}  
\end{align*}
Later, we will tie things up by ensuring that the set of states
bound to early exits via a \code{goto}~$l$ in $P$ are exactly the sets
$g_l$ hypothesised here as entries at label $l$ in $Q$ (and vice versa).
The type of the {\em interpretation} expressed by the fancy square
brackets is
\[
\llbracket{-_1}\rrbracket_{-_2} : {\mathscr C}\to(L\pfun \mathds{P}S)\to 
\mathds{P}(S\times \star\times S)
\]
where $g$, the second argument/suffix, has the partial
function type $L\pfun \mathds{P}S$ and the first argument/bracket interior
has type $\mathscr C$, denoting a simple language of imperative statements whose
grammar is set out in Table~\ref{tab:BNF}.  The models of some of its
very basic statements as members of $\mathds{P}(S\times\star\times S)$
are shown in Table~\ref{tab:ex1to4} and we will discuss them and the
interpretations of other language constructs below.

\begin{table}[tb]
\subtable{}{
\fbox{
\begin{minipage}[t]{0.465\textwidth}
A \code{skip} statement is modelled as
\[
\llbracket \code{skip} \rrbracket_g  =
   \{ s\mathop{\mapsto}\limits^{\N} s\mid s\in S \}
\]
It makes the
transition from a state to the same state again, and ends `normally'.
\end{minipage}
}
}
\subtable{}{
\fbox{
\begin{minipage}[t]{0.465\textwidth}
A \code{return} statement has the model 
\[
\llbracket \code{return} \rrbracket_g =
   \{ s\mathop{\mapsto}\limits^{\R} s\mid s\in S \}
\]
It exits at once `via a return flow' after a single,
trivial transition.
\end{minipage}
}
}
\subtable{}{
\fbox{
\begin{minipage}[t]{0.465\textwidth}
The model of $\code{skip};\code{return}$ is
\[
\llbracket \code{skip};\code{return}\rrbracket_g =
   \{ s\mathop{\mapsto}\limits^{\R} s\mid s\in S \}
\]
which is the same as that of \code{return}. It is made up of
the compound of two trivial state transitions, 
$s\mathop{\mapsto}\limits^{\N} s$ from \code{skip} and
$s\mathop{\mapsto}\limits^{\R} s$  from \code{return},
the latter ending in a `return flow'.
\medskip
\medskip
\medskip
\medskip
\medskip
\end{minipage}
}
}
\subtable{}{
\fbox{
\begin{minipage}[t]{0.465\textwidth}
The 
$\code{return};\code{skip}$ compound is modelled as:
\[
\llbracket \code{return};\code{skip}\rrbracket_g =
   \{ s\mathop{\mapsto}\limits^{\R} s\mid s\in S \}
\]
It is made up of of just the $s\mathop{\mapsto}\limits^{\R} s$
transitions from \code{return}.
There is no transition that can be formed as the
composition of a transition from \code{return} followed by a transition 
from \code{skip}, because none of the first end `normally'.
\end{minipage}
}
}
\medskip
\caption{Models of simple statements.}
\label{tab:ex1to4}
\end{table}


A real imperative programming language such as C can be mapped onto
$\mathscr C$ -- in principle exactly, but in practice rather
approximately with respect to data values, as will be indicated below.
\begin{table}[t]
\caption{Grammar of the abstract imperative language $\mathscr C$,
where integer variables $x\in X$, term expressions $e \in \mathscr E$,
boolean expressions $b \in \mathscr B$, labels $l \in L$, exceptions $k\in K$,
statements $c \in \mathscr C$, integer constants $n \in {\mathds{Z}}$,
infix binary relations $r \in R$, subroutine names $h \in H$.
Note that  labels (the targets of \code{goto}s) are declared with
`\code{label}' and a label cannot be the first thing in a code sequence;
it must follow some statement. Instead of \code{if}, $\mathscr C$ has guarded
statements, and explicit nondeterminism, which, however, is only to be
used here in the deterministic construct $b\to P \shortmid \lnot b\to
Q$ for code fragments $P$, $Q$.
}
\label{tab:BNF}
\footnotesize
\vspace{-2ex}
\begin{align*}
{\mathscr C}~{:}{:}&
{\text{=}}
       ~\code{skip} 
~{\mid}~\code{return}
~{\mid}~\code{break}
~{\mid}~\code{goto}\,\,l
~{\mid}~c{;}c
~{\mid}~ x {=} e
~{\mid}
       ~b{\to} c
~{\mid}~c\,{\shortmid}\,c
~{\mid}~\code{do}~c
~{\mid}~c\,{:}\,l
~{\mid}~\code{label}\,\,l.c
~{\mid}~\code{call}\,\,h\\
&\mid 
      ~\code{try}~c~\code{catch}(k)~c
~{\mid}~\code{throw}\,\,k
\\
\mathscr{E}~{:}{:}&
{\text{=}}
  ~n
  \mid x
  \mid n*e
  \mid e + e
  \mid b\,?\,e:e
\\
\mathscr{B}~{:}{:}&
{\text{=}}
   ~\top \mid \bot \mid e~r~e
   \mid b \lor b
   \mid b \land b
   \mid \lnot b
   \mid \exists x. b
\\
R~{:}{:}&
{\text{=}}
   ~{<}
   \mid {>}
   \mid {\le}
   \mid {\ge}
   \mid {=}
   \mid {\ne}
\end{align*}
\vspace{-4ex}
\end{table}
A conventional $\code{if}(b)~P~\code{else}~Q$ statement in C is
written as the nondeterministic choice between two guarded statements
$b\to P\shortmid\lnot b\to Q$ in the abstract language $\mathscr C$;
the conventional $\code{while}(b)~P$ loop in C is expressed as
$\code{do}\{\lnot b\to\code{break}\shortmid b\to P\}$, using
the forever-loop of $\mathscr C$, etc. A sequence $P; l: Q$ in C with a
label $l$ in
the middle should strictly be expressed as $P : l; Q$ in
$\mathscr C$, but we regard
$P ; l : Q$  as syntactic sugar for that, so it is still
permissible to write $P ; l: Q$ in $\mathscr C$. As a very special syntactic
sweetener, we permit $l : Q$ too, even when there is no preceding
statement $P$, regarding it as an abbreviation for $\code{skip} : l; Q$.

Curly brackets may be used to group code statements for clarity in
$\mathscr C$,
and parentheses may be used to group expressions.  The variables are
globals and are not formally declared.  The terms of $\mathscr C$ are piecewise
linear integer forms in integer variables, so the boolean expressions
are piecewise comparisons between linear forms.

\begin{example}
A valid integer term is `$\rm 5x + 4y + 3$', and a boolean expression is
`$\rm 5x + 4y + 3 < z - 4 \land y \le x$'.

In consequence another valid
integer term, taking the value of the first on the range defined by
the second, and 0 otherwise, is 
`$\rm (5x + 4y + 3 < z - 4 \land y \le x)\,?\,5x + 4y + 3:0$'.
\end{example}

\noindent
The limited set of terms in $\mathscr C$ makes it practically
impossible to map standard imperative language 
assignments as simple as `$\rm x=x*y$' or `$\rm x= x\mid y$' (the
bitwise or) succinctly.  In principle, those could be expressed exactly
point by point using conditional expressions (with at most $2^{32}$
disjuncts), but it is usual to model  all those cases by means of
an abstraction away from the values taken to attributes
that can be represented more elegantly using piecewise
linear terms The abstraction may be to how many times the variable has
been read since last written, for example, which maps `$\rm x= x*y$' to
`$\rm x = x+1; y = y+1; x = 0$'.  

Formally, terms have a conventional evaluation as integers and
booleans that is shown (for completeness!) in Table~\ref{tab:ev}.
The reader may note 
the notation $s\,x$ for the evaluation of the variable named $x$ in state
$s$, giving its integer value as result. We say that state $s$ {\em
satisfies} boolean term $b\in\mathscr B$, written $s\models b$, whenever $\llbracket
b\rrbracket s$ holds.
%
%
\begin{table}[t]
\caption{The conventional evaluation of integer and boolean terms of 
$\mathscr C$, for variables $x\in X$, integer constants $\kappa\in{\mathds{Z}}$,
using $s\,x$ for the (integer) value  of the variable named $x$ in a
state $s$. The form $b[n/x]$ means `expression $b$ with integer $n$
substituted for all unbound occurrences of $x$'.
}
\label{tab:ev}
\footnotesize
\[
\begin{array}[t]{@{}r@{~}c@{~}l}
\llbracket-\rrbracket&:&{\mathscr E} \to S \to {\mathds{Z}}\\
\llbracket x \rrbracket s&=&s\,x\\
\llbracket \kappa \rrbracket s&=&\kappa\\
\llbracket \kappa*e \rrbracket s&=&\kappa*\llbracket e \rrbracket s\\
\llbracket e_1 + e_2 \rrbracket s&=&\llbracket e_1 \rrbracket s
                                  + \llbracket e_2 \rrbracket s\\
\llbracket b\,?\,e_1:e_2 \rrbracket s&=&
      \mbox{if} ~\llbracket b \rrbracket s~ \mbox{then}~ \llbracket e_1
      \rrbracket s~\mbox{else}~ \llbracket e_2 \rrbracket s
\end{array}
\quad
\begin{array}[t]{r@{~}c@{~}l@{}}
\llbracket-\rrbracket&:&{\mathscr B}\to S \to \mbox{\bf bool}\\
\llbracket \top \rrbracket s&=&\top\qquad
\llbracket \bot \rrbracket s =\bot\\
\llbracket e_1 < e_2 \rrbracket s &=&\llbracket e_1 \rrbracket s
                                  < \llbracket e_2 \rrbracket s\\
\llbracket b_1 \lor b_2 \rrbracket s &=&\llbracket b_1 \rrbracket s
                                  \lor \llbracket b_2 \rrbracket s\\
\llbracket b_1 \land b_2 \rrbracket s &=&\llbracket b_1 \rrbracket s
                                  \land \llbracket b_2 \rrbracket s\\
\llbracket \lnot b  \rrbracket s &=&\lnot (\llbracket b \rrbracket s)\\
\llbracket \exists x. b\rrbracket s &=& \exists n\in \mathds{Z}.
                        \llbracket b[n/x] \rrbracket s
\end{array}
\]
\end{table}

%

The \code{label} construct of $\mathscr C$ declares a label $l\in L$ that may 
subsequently be used as the target in \code{goto}s.  The component $P$
of the construct is the body of code in which the label is {\em in
scope}.  A label may not be mentioned except in the scope of its
declaration.  The same label may not be declared again in the scope of
the first declaration.  The semantics of labels and \code{goto}s 
will be further explained below.

The only way of exiting the $\mathscr C$ \code{do} loop construct normally is via
\code{break} in the body $P$ of the loop.  An abnormal exit other than
\code{break} from the body $P$ terminates the whole loop
abnormally.  Terminating the body $P$ normally evokes one
more turn round the loop. So conventional \code{while} and \code{for}
loops need to be mapped to a \code{do} loop with a guarded
\code{break} statement inside, at the head of the body.
The precise models for this and every construct of $\mathscr C$ as a set of
coloured transitions are enumerated in Table~\ref{tab:interpretation}.

\begin{table}[t]
\caption{Model of programs of language $\mathscr C$,
given as hypothesis the sets of states $g_l$
for $l\in L$ observable at
$\code{goto}~l$ statements. A recursive reference means
`the least set satisfying the condition'.
For $h\in H$, the subroutine named $h$ has code $[h]$.
The state $s$ altered by the assignment of $n$ to variable $x$ is
written $s[x\mapsto n]$.
}
\label{tab:interpretation}
\footnotesize
\[
\begin{array}[t]{@{}r@{~}l@{~}}
\llbracket-\rrbracket_g&:~\mathscr C \to \mathds{P}(S\times\star\times S)\notag\\[0.5ex]
\llbracket\code{skip}\rrbracket_g &=
  \{s_0\mathop{\mapsto}\limits^{\N}s_0\mid s_0\in S\}\\
\llbracket\code{return}\rrbracket_g s_0&= 
  \{s_0\mathop{\mapsto}\limits^{\R}s_0\mid s_0\in S\}\\
\llbracket\code{break}\rrbracket_g &= 
  \{s_0\mathop{\mapsto}\limits^{\B}s_0\mid s_0\in S\}\\
\llbracket\code{goto}~l\rrbracket_g &=
  \{s_0\mathop{\mapsto}\limits^{\G_l}s_0\mid s_0\in S\}\\
\llbracket\code{throw}~k\rrbracket_g &=
  \{s_0\mathop{\mapsto}\limits^{\e_k}s_0\mid s_0\in S\}\\
\llbracket P;Q\rrbracket_g &= 
  \{ 
    s_0\mathop{\mapsto}\limits^\iota s_1 \in \llbracket P\rrbracket_g \mid \iota \ne \N
    \}\\
  &\cup~
  \{ 
     s_0\mathop{\mapsto}\limits^\iota s_2 \mid
         s_1\mathop{\mapsto}\limits^\iota s_2\in\llbracket Q\rrbracket_g
         ,~ s_0\mathop{\mapsto}\limits^{\N} s_1\in \llbracket P\rrbracket_g
  \}
\\
\llbracket x=e\rrbracket_g s_0&=
  \{ s_0\mathop{\mapsto}\limits^{\N} s_0[x\mapsto\llbracket
  e\rrbracket s_0]\}
  \mid s_0\in S
  \}
\\
\llbracket p \to P\rrbracket_g &= 
  \{ s_0\mathop{\mapsto}\limits^\iota s_1 \in \llbracket P\rrbracket_g \mid \llbracket p\rrbracket s_0\}
\\
\llbracket P\shortmid Q\rrbracket_g&= \llbracket P\rrbracket_g \cup
\llbracket Q\rrbracket_g 
\\
\llbracket \code{do}~P\rrbracket_g &= 
  \{ s_0\mathop{\mapsto}\limits^{\N} s_1 \mid
  s_0\mathop{\mapsto}\limits^{\B} s_1 \in \llbracket P\rrbracket_g \}\\
  &\cup~
  \{ s_0\mathop{\mapsto}\limits^\iota s_1 \in \llbracket P\rrbracket_g  \mid~\iota\ne {\N},{\B} \}\\
  &\cup~
  \{ s_0\mathop{\mapsto}\limits^\iota s_2 \mid
     s_1\mathop{\mapsto}\limits^\iota s_2\in \llbracket \code{do}~P\rrbracket_g 
  ,~
     s_0\mathop{\mapsto}\limits^\iota s_1 \in \llbracket P\rrbracket_g 
  \}
\\
\llbracket P : l \rrbracket_g &= 
  \llbracket P\rrbracket_g \\
  &\cup~
  \{ s_0\mathop{\mapsto}\limits^{\N} s_1 \mid s_0\in S,~s_1 \in g_l \}
\\
\llbracket \code{label}~ l~ P\rrbracket_g &=
 \llbracket P \rrbracket_{g\cup\{l\mapsto g_l^*\}}
 - g_l^*
 \label{eq:label}
 \\
 &~\mbox{where}~g_l^*  = 
 \{
 s_1 \mid
 s_0\mathop{\mapsto}\limits^{\G_l} s_1\in
 \llbracket P \rrbracket_{g\cup\{l\mapsto g_l^*\}}
 \}
\\
\llbracket \code{call}~h\rrbracket_g &=
 \{
 s_0\mathop{\mapsto}\limits^{\N} s_1 \mid
 s_0\mathop{\mapsto}\limits^{\R} s_1 \in \llbracket [h]\rrbracket_{\{\,\}}
 \}
 \\
  &\cup~
  \{ s_0\mathop{\mapsto}\limits^{\e_k} s_1 \in \llbracket [h]\rrbracket_{\{\,\}} \mid k\in K
  \}
\\
\llbracket \code{try}~P~\code{catch}(k)~Q\,\rrbracket_g &=
 \{
 s_0\mathop{\mapsto}\limits^\iota s_1 \in\llbracket P\rrbracket_g \mid ~\iota\ne {\e}_k
 \}\\
 &\cup~
 \,\{
 s_0\mathop{\mapsto}\limits^\iota s_2
 \mid
 s_1\mathop{\mapsto}\limits^\iota s_2\in \llbracket Q\rrbracket_g 
 ,~
 s_0\mathop{\mapsto}\limits^{\e_k} s_1\in \llbracket P\rrbracket_g 
 \}
\end{array}
\]
\end{table}
Among the list of models in Table~\ref{tab:interpretation}, that
of \code{label} declarations in particular requires 
explanation because labels are more explicitly controlled in $\mathscr
C$ than in standard
imperative languages. Declaring a label $l$ makes it invisible from the
outside of the block (while enabling it to be used inside), working just
the same way as a local variable declaration does in a standard
imperative programming language.
A declaration  removes from the model of a
labelled statement
the dependence on the hypothetical set $g_l$ of the states
attained at \code{goto}~$l$ statements.
All the instances of
\code{goto}~$l$ statements are inside the block with the declaration at
its head, so we can take a
look to see what totality of states really do accrue 
at \code{goto}~$l$ statements;
they are recognisable in the model
because they are the outcomes of the transitions that are marked with $\G_l$.
Equating the set of such states with the
hypothesis $g_l$  gives the (least) fixpoint $g_l^*$ required  in the
\code{label}~$l$ model.
%

The hypothetical sets $g_l$ of states that obtain at \code{goto}~$l$
statements are used at the point where the label $l$ appears within the
scope of the declaration. We say that any of the states in $g_l$ may be an
outcome of passing through the label $l$, because it may have been brought in
by a \code{goto}~$l$ statement.  That is an overestimate;  in reality,
if the state just before the label is $s_1$, then at most those states
$s_2$ in $g_l$ that are reachable at a \code{goto}~$l$ from an initial
program state $s_0$ that also leads to $s_1$ (either $s_1$ first or
$s_2$ first) may obtain after the
label $l$, and that may be considerably fewer $s_2$ than we calculate
in $g_l^*$.
Here is a visualisation of such a situation; the curly arrows denote a trace:
\[
\begin{array}{ccc@{\quad}l@{\quad}l}
    & & \{s_1\}&l:&\kern-20pt\{s1,s_2\}\\[-1ex]
    &\rotatebox{45}{$\leadsto$}&\\[-1ex]
\{s_0\}& &\raisebox{1.25ex}{\rotatebox{-90}{$\leadsto$}}\\[-1ex]
    &&\\[-1ex]
    & & \{s_2\}&\code{goto}~l
\end{array}
\]
If the initial precondition on the code admits more than one
initial state $s_0$ then the model may admit more states $s_2$
after the label $l$ than occur in reality when $s_1$ precedes $l$,
because the model does not take into account the dependence of $s_2$ on
$s_1$ through $s_0$. It is enough for the model that $s_2$ proceeds
from some $s_0$ and $s_1$ proceeds from some (possibly different) $s_0$
satisfying the same initial condition.
In mitigation, \code{goto}s are sparsely
distributed in real codes and we have not found the effect 
pejorative.

\begin{example}
Consider the code $R$ and suppose the input is restricted 
to a unique state $s$:
\[
\code{label}~A,B.\overbrace{
\underbrace{\code{skip};~\code{goto}~A ; ~B:~\code{return}
;~A}_Q:~\code{goto}~B}^P
\]
with labels $A$, $B$ in scope in body $P$, and the marked fragment $Q$. The 
single transitions made in the code $P$ 
and the corresponding statement sequences are:
\begin{align*}
&s\mathop{\mapsto}\limits^{\N} s \mathop{\mapsto}\limits^{{\G}_A} s
 &\#~& \code{skip};~\code{goto}~A;\\[-0.75ex]
&s \mathop{\mapsto}\limits^{\N} s \mathop{\mapsto}\limits^{\N} s
\mathop{\mapsto}\limits^{{\G}_B} s
 &\#~& \code{skip};~\code{goto}~A;A:~\code{goto}~B\\[-0.75ex]
&s \mathop{\mapsto}\limits^{\N} s \mathop{\mapsto}\limits^{\N} s \mathop{\mapsto}\limits^{{\N}} s \mathop{\mapsto}\limits^{\R} s
 &\#~& \code{skip};~\code{goto}~A;A:~\code{goto}~B;B:~\code{return}
\end{align*}
with observed states
$g_A = \{ s \}$,
$g_B = \{ s \}$ at the labels $A$ and $B$ respectively.

The $\code{goto}~B$  statement is not in the
fragment $Q$ so there is no way of knowing about the set  
of states at $\code{goto}~B$ while examining $Q$. Without that input,
the traces of $Q$ are
\begin{align*}
&s \mathop{\mapsto}\limits^{\N} s \mathop{\mapsto}\limits^{{\G}_A} s 
 &\#~&\code{skip};~\code{goto}~A\hspace{1in}\\
&s \mathop{\mapsto}\limits^{\N} s \mathop{\mapsto}\limits^{\N} s 
 &\#~&\code{skip};~\code{goto}~A;A:\hspace{1in}
\end{align*}
There are no possible entries at $B$ originating from within $Q$ itself. That is, the
model $\llbracket Q\rrbracket_g$ of $Q$ as a set of transitions
assuming $g_B = \{\,\}$,
meaning there are no entries
from outside, is $\llbracket Q \rrbracket_g  = \{
s\mathop{\mapsto}\limits^{\N}s,s\mathop{\mapsto}\limits^{\G_A}s 
\}$.

When we hypothesise $g_B = \{ s \}$ for
$Q$, then $Q$ has more traces:
\begin{align*}
&s \mathop{\mapsto}\limits^{\N} s \mathop{\mapsto}\limits^{\N} s \mathop{\mapsto}\limits^{{\N}} s \mathop{\mapsto}\limits^{\R} s
 &\#~& \code{skip};~\code{goto}~A;A:~\code{goto}~B;B:~\code{return}
\end{align*}
corresponding to these entries at $B$ from the rest of the code
proceeding to the \code{return} in $Q$, 
and $\llbracket Q\rrbracket_g = \{ s\mathop{\mapsto}\limits^{\N}s,~
s\mathop{\mapsto}\limits^{{\G}_A}s,~ s\mathop{\mapsto}\limits^{\R}s \}$. In
the context of the whole code $P$, that is the model for $Q$ as a
set of initial to final state transitions. 
\label{ex:6}
\end{example}

\begin{example}
Staying with the code of Example~\ref{ex:6},
the set $\{ s\mathop{\mapsto}\limits^{{\G}_A}s,~
s\mathop{\mapsto}\limits^{{\G}_B}s,~ s\mathop{\mapsto}\limits^{\R}s \}$
is the model $\llbracket P\rrbracket_g$ of $P$ starting at state $s$ with
assumptions $g_A$, $g_B$ of Example~\ref{ex:6},
and the sets $g_A$, $g_B$ are
observed at the labels $A$, $B$ in the code under these assumptions.
Thus $\{A\mapsto g_A, B\mapsto g_B\}$ is 
the fixpoint $g^*$ of the {\bf label} declaration rule in
Table~\ref{tab:interpretation}.

That rule says to next remove transitions ending at
\code{goto}~$A$s  and $B$s from visibility in the model of the declaration block,
because they can go nowhere else, leaving only 
$\llbracket R\rrbracket_{\{\,\}}  = \{ s\mathop{\mapsto}\limits^{\R}s\}$
as the set-of-transitions model of the whole block of code, which
corresponds to the sequence
$\code{skip};\code{goto}~A;A:~\code{goto}~B;B:~\code{return}$.

\end{example}

\noindent
We extend the propositional language to ${\mathscr B}^*$ which includes the
modal operators ${\N}$, ${\R}$, ${\B}$, ${\G}_l$, ${\e}_k$ for $l\in L$, $k\in K$,
as shown in Table~\ref{tab:B*}, which defines
a model of $\mathscr B^*$ on transitions. The predicate $\N p$ informally
should be read as picking out from the set of all coloured state transitions
`those normal-coloured transitions that
produce a state  satisfying $p$', and similarly for the other operators.
\begin{table}[t]
\caption{%
Extending the language $\mathscr B$ of propositions to modal operators ${\N}$, ${\R}$,
${\B}$, ${\G}_l$, ${\e}_k$ for $l\in L$, $k\in K$. An evaluation
on transitions
 is given for $b\in \mathscr B$, $b^* \in \mathscr B^*$.
}
\footnotesize
\label{tab:B*}
\[
\mathscr B^*~{:}{:}{\text{\---}}~
   b \mid {\N}\,b^* \mid {\R}\,b^* \mid {\B}\,b^* \mid {\G}_l\,b^* \mid {\e}_k\,b^*
   \mid b^* \lor b^*
   \mid b^* \land b^*
   \mid \lnot b^*
\]
\[
\begin{array}[t]{r@{~}c@{~}l}
\llbracket b
\rrbracket(s_0\mathop{\mapsto}\limits^{\iota} s_1)&=&\llbracket b\rrbracket s_1
\\
\llbracket {\N}\,b^* \rrbracket(s_0\mathop{\mapsto}\limits^{\iota} s_1) &=&
(\iota={\N}) \land \llbracket b^*\rrbracket (s_0\mathop{\mapsto}\limits^{\iota} s_1)
\\
\llbracket {\R}\,b^* \rrbracket(s_0\mathop{\mapsto}\limits^{\iota} s_1) &=&
(\iota={\R}) \land \llbracket b^*\rrbracket (s_0\mathop{\mapsto}\limits^{\iota} s_1)
\\
\llbracket {\B}\,b^* \rrbracket(s_0\mathop{\mapsto}\limits^{\iota} s_1) &=&
(\iota={\B}) \land \llbracket b^*\rrbracket (s_0\mathop{\mapsto}\limits^{\iota} s_1)
\\
\llbracket {\G}_l\,b^* \rrbracket(s_0\mathop{\mapsto}\limits^{\iota} s_1) &=&
(\iota={\G}_l) \land \llbracket b^*\rrbracket (s_0\mathop{\mapsto}\limits^{\iota} s_1)
\\
\llbracket {\e}_k\,b^* \rrbracket(s_0\mathop{\mapsto}\limits^{\iota} s_1) &=&
(\iota={\e}_k) \land \llbracket b^*\rrbracket (s_0\mathop{\mapsto}\limits^{\iota} s_1)
\end{array}
\vspace{-2ex}
\]
\end{table}
The modal operators 
satisfy the algebraic laws given in Table~\ref{tab:modal}.
Additionally, however, for non-modal $p\in \mathscr B$,
\begin{equation}
p = {\N} p \lor {\R} p \lor {\B} p \lor \lor\kern-6pt\lor {\G}_l p
\lor \kern-6pt\lor {\e}_k p
\label{eq:star}
\end{equation}
because each transition must be some colour, and those are all the
colours. The decomposition works in the general case too:

\begin{proposition}
Every $p\in \mathscr B^*$ can be (uniquely) expressed as
\[
p = {\N} p_{\N} \lor {\R} p_{\R} \lor {\B} p_{\B} \lor
\lor\kern-6pt\lor {\G}_l p_{\G_l}
\lor \kern-6pt\lor {\e}_k p_{\e_k}
\]
for some $p_{\N}$, $p_{\R}$, etc that are free of modal operators.
\label{prop:1}
\end{proposition}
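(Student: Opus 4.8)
The plan is to argue by structural induction on the grammar of $\mathscr B^*$ in Table~\ref{tab:B*}, carrying along as invariant that $p$ equals a normal form
\[
p = {\N}\,p_{\N} \lor {\R}\,p_{\R} \lor {\B}\,p_{\B} \lor \lor \lor\kern-6pt\lor {\G}_l\,p_{\G_l} \lor \lor\kern-6pt\lor {\e}_k\,p_{\e_k}
\]
with every component $p_{\bullet}\in\mathscr B$ free of modal operators, followed by a short separate argument for uniqueness. The two ingredients I rely on in the inductive steps are the algebraic laws of Table~\ref{tab:modal} and one observation read straight off the semantics in Table~\ref{tab:B*}: a single transition $s_0\mathop{\mapsto}\limits^{\iota} s_1$ carries exactly one colour $\iota$, so the modalities are mutually orthogonal; concretely ${\mathsf E}\,{\mathsf E}\,b^*\equiv{\mathsf E}\,b^*$, ${\mathsf E}_1\,{\mathsf E}_2\,b^*\equiv\bot\in\mathscr B$ for distinct ${\mathsf E}_1,{\mathsf E}_2$, every ${\mathsf E}$ distributes over $\lor$, and ${\mathsf E}\,a\land{\mathsf E}\,b\equiv{\mathsf E}(a\land b)$.

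For the base case $p=b$ with $b\in\mathscr B$ non-modal, the identity~\eqref{eq:star} is already the required decomposition (take $p_{\bullet}=b$ in every slot). For $p={\mathsf E}\,q$ with ${\mathsf E}$ one of the six modalities, I apply the induction hypothesis to $q$, push ${\mathsf E}$ through the resulting disjunction, and invoke orthogonality: every summand collapses to $\bot$ except the one whose inner modality coincides with ${\mathsf E}$, so $p$ is again in normal form, with a single non-trivial slot. For $p=p_1\lor p_2$, I add the two induction-hypothesis normal forms slot by slot, using ${\mathsf E}\,a\lor{\mathsf E}\,b\equiv{\mathsf E}(a\lor b)$; for $p=p_1\land p_2$, I multiply the two normal forms out, the cross terms vanishing by orthogonality and the diagonal terms combining via ${\mathsf E}\,a\land{\mathsf E}\,b\equiv{\mathsf E}(a\land b)$.

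The only clause that needs real thought is $p=\lnot p_1$, and I expect the colour-partition property behind it to be the single genuine obstacle. Here I would first record the auxiliary identity that, once $p_1$ is in normal form ${\N}a_{\N}\lor{\R}a_{\R}\lor\cdots$, its negation is ${\N}\lnot a_{\N}\lor{\R}\lnot a_{\R}\lor{\B}\lnot a_{\B}\lor\lor\kern-6pt\lor {\G}_l\lnot a_{\G_l}\lor\lor\kern-6pt\lor {\e}_k\lnot a_{\e_k}$: on any transition exactly one slot of $p_1$ is active, so $p_1$ fails there precisely when the matching modal-free component fails there. Applying this identity to the normal form supplied by the induction hypothesis closes the case, since each $\lnot a_{\bullet}$ is again in $\mathscr B$.

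For uniqueness, suppose two normal forms with modal-free components $p_{\bullet}$ and $q_{\bullet}$ denote the same predicate on transitions. Evaluating both on an arbitrary transition $s_0\mathop{\mapsto}\limits^{\N} s_1$, only the ${\N}$-slot contributes on either side, whence $\llbracket p_{\N}\rrbracket s_1=\llbracket q_{\N}\rrbracket s_1$; since $s_1$ is arbitrary this says $p_{\N}$ and $q_{\N}$ are logically equivalent in $\mathscr B$, and likewise for ${\R}$, ${\B}$, each ${\G}_l$ and each ${\e}_k$. Hence the decomposition is unique up to equivalence of its modal-free parts, which is the sense of the parenthetical ``uniquely'' in the statement. (For a label $l$ or exception kind $k$ not appearing in $p$ the corresponding component is simply $\bot$, so the displayed disjunctions are effectively finite.)
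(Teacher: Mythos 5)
Your proposal is correct and takes essentially the same route as the paper: the base case is \eqref{eq:star}, the rest is structural induction on $\mathscr B^*$ using the modal laws of Table~\ref{tab:modal} together with boolean algebra, and uniqueness is extracted slot by slot via orthogonality and the one-colour-per-transition semantics of Table~\ref{tab:B*}. You merely make explicit the induction cases (notably negation) and the semantic uniqueness step that the paper's proof leaves implicit.
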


\begin{proof}
\em
Equation  \eqref{eq:star} gives the result for $p\in \mathscr B$. The
rest is by structural induction on $p$, using 
Table~\ref{tab:modal} and boolean algebra. Uniqueness follows because
${\N}p_{\N} = {\N}p_{\N}'$, for example, applying ${\N}$ to two
possible decompositions, and applying the orthogonality and idempotence
laws;  apply the definition of
${\N}$ in the model in Table~\ref{tab:B*} to deduce $p_{\N}=
p_{\N}'$ for non-modal predicates $p_{\N}$, $p_{\N}'$. Similarly for ${\B}$, ${\R}$,
${\G}_l$, ${\e}_k$.
\hfill\EOP
\end{proof}
\begin{table}[t]
\caption{Laws of the modal operators ${\N}$, ${\R}$, ${\B}$, ${\G}_l$, ${\e}_k$
with $M,M_1,M_2\in \{{\N},{\R},{\B},{\G}_l,{\e}_k\mid l\in L,k\in K\}$ and $M_1\ne M_2$.
}
\label{tab:modal}
\footnotesize
\begin{align*}
M(\bot) &= \bot     &\text{(flatness)}
\\
M(b_1\lor b_2) &= M(b_1)\lor M(b_2) &\text{(disjunctivity)}
\\
M(b_1\land b_2) &= M(b_1)\land M(b_2) &\text{(conjunctivity)}
\\
M(M b) &= M b  &\text{(idempotence)}
\\
M_2(M_1 b) = M_1(b) \land M_2(b) &= \bot&\text{(orthogonality)}
\end{align*}
\end{table}

\noindent
So modal formulae $p\in \mathscr{B}^*$ may be viewed as tuples
$(p_{\N},p_{\R},p_{\B},p_{{\G}_l},p_{{\e}_k})$ of non-modal formulae
from $\mathscr{B}$ for labels $l\in L$, exception kinds $k\in K$.
That means that ${\N} p \lor {\R} q$, for example, is simply a
convenient notation for writing down two assertions at once: one that
asserts $p$ of the final states of the transitions that end `normally',
and one that asserts $q$ on the final states of the
transitions that end in a `return flow'. The meaning of ${\N} p \lor
{\R} q$ is  the union of the set of the normal transitions with final
state that satisfy $p$ 
plus the set of the transitions that end in a `return flow' and whose
final states satisfy $q$. We can now give meaning to a
notation that looks like (and is intended to signify) a Hoare
triple with an explicit context of certain `\code{goto} assumptions':

\begin{definition}
Let $g_l = {\llbracket p_l \rrbracket}$ be the set of states satisfying
$p_l\in \mathscr B$, labels $l\in L$.
Then `${\G}_l\,p_l \triangleright \{p\}~a~\{q\}$',
for non-modal $p,p_l\in \mathscr B$, $P\in \mathscr C$ and $q\in
\mathscr B^*$, means:
\begin{align*}
\llbracket {\G}_l\,p_l\triangleright\{p\}~P~\{q\}\rrbracket
&= \llbracket \{p\}~P~\{q\}\rrbracket_g
\\
&= \forall
 s_0\mathop{\mapsto }\limits^{\iota} s_1
   \in \llbracket P \rrbracket_g.~
   \llbracket p \rrbracket s_0 \Rightarrow
   \llbracket q \rrbracket (s_0\mathop{\mapsto }\limits^{\iota} s_1)
\end{align*}
\label{def:A5}
\end{definition}
That is read as `the triple $\{p\}~P~\{q\}$ holds under assumptions
$p_l$ at $\code{goto}~l$ when every transition of $P$ that starts at a
state satisfying $p$ also satisfies $q$'.  The explicit Gentzen-style
assumptions $p_l$ are free of modal operators.  What is meant by
the notation is that those states that may be attainable as the program
traces pass through \code{goto} statements are assumed to be restricted
to those that satisfy $p_l$.

The ${\G}_l\,p_l$ assumptions may be separated by commas, as
${\G}_{l_1}\,p_{l_1}, {\G}_{l_2}\,p_{l_2},\dots$, with $l_1\ne l_2$,
etc.  Or they may be written as a disjunction ${\G}_{l_1}\,p_{l_1}\lor
{\G}_{l_2}\,p_{l_2}\lor\dots$ because the information in this modal
formula is only the mapping $l_1\mapsto p_{l_1}$, $l_2\mapsto p_{l_2}$,
etc.  If the same $l$ appears twice among the
disjuncts ${\G}_l\,p_l$, then we understand that the union of the two
$p_l$ is intended.

Now we can prove the validity of laws about triples drawn
from what Definition~\ref{def:A5} says.  The first laws are
strengthening and weakening results on pre- and postconditions:

\begin{proposition}
The following algebraic relations hold:
\begin{align}
\llbracket \{\bot\}~P~\{q\} \rrbracket_g
  &{\iff} \top
\label{eq:G1}
\\
\llbracket \{p\}~P~\{\top\} \rrbracket_g
  &{\iff} \top
\label{eq:G2}
\\
\llbracket \{p_1\lor p_2\}~P~\{q\} \rrbracket_g
  &{\iff}
        \llbracket \{p_1\}~P~\{q\} \rrbracket_g
        \land \llbracket \{p_2\}~P~\{q\} \rrbracket_g
\label{eq:G3}
\\
\llbracket \{p\}~P~\{q_1\land q_2\} \rrbracket_g
   &{\iff}
        \llbracket \{p\}~P~\{q_1\} \rrbracket_g
        \land \llbracket \{p\}~P~\{q_2\}  \rrbracket_g
\label{eq:G4}
\\
(p_1{\to} p_2) \land\llbracket \{p_2\}~P~\{q\} \rrbracket_g
   &\implies
   \llbracket \{p_1\}~P~\{q\} \rrbracket_g
\label{eq:G5}
\\
(q_1{\to} q_2) \land\llbracket \{p\}~P~\{q_1\} \rrbracket_g
   &\implies
   \llbracket \{p\}~P~\{q_2\} \rrbracket_g
\label{eq:G6}
\\
\llbracket \{p\}~P~\{q\} \rrbracket_{g'}
   &{\implies}
\llbracket \{p\}~P~\{q\} \rrbracket_g
\label{eq:G7}
\end{align}
for $p,p_1,p_2\in \mathscr B$, $q,q_1,q_2\in \mathscr B^*$, $P\in
\mathscr C$, and $g_l \subseteq g'_l\in\mathds{P}S$.
\label{prp:P3}
\end{proposition}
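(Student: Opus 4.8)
The plan is to unfold Definition~\ref{def:A5} in each case and reduce the claim to elementary facts about the satisfaction relation $\llbracket - \rrbracket$ on transitions, using Proposition~\ref{prop:1} to handle the modal postconditions by decomposing $q$ into its components $(q_{\N},q_{\R},\dots)$. Recall that $\llbracket \{p\}~P~\{q\}\rrbracket_g$ abbreviates the universally quantified statement $\forall (s_0\mathop{\mapsto}\limits^{\iota}s_1)\in\llbracket P\rrbracket_g.\ \llbracket p\rrbracket s_0 \Rightarrow \llbracket q\rrbracket(s_0\mathop{\mapsto}\limits^{\iota}s_1)$. Each of \eqref{eq:G1}--\eqref{eq:G7} is then a statement of first-order logic about a bounded universal quantifier, and the proofs are one-line observations about how $\Rightarrow$, $\land$, and the bound set interact.

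Concretely: \eqref{eq:G1} holds because $\llbracket\bot\rrbracket s_0$ is always false, so the implication inside the quantifier is vacuously true for every transition; \eqref{eq:G2} holds because $\llbracket\top\rrbracket(s_0\mathop{\mapsto}\limits^{\iota}s_1)$ is always true, so the consequent is satisfied regardless of the antecedent. For \eqref{eq:G3} I would use $\llbracket p_1\lor p_2\rrbracket s_0 = \llbracket p_1\rrbracket s_0 \lor \llbracket p_2\rrbracket s_0$ (Table~\ref{tab:ev}) together with the propositional equivalence $(A\lor B\Rightarrow C)\iff(A\Rightarrow C)\land(B\Rightarrow C)$, then push the outer $\forall$ through the $\land$. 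For \eqref{eq:G4} I would use $\llbracket q_1\land q_2\rrbracket(s_0\mathop{\mapsto}\limits^{\iota}s_1) = \llbracket q_1\rrbracket(\cdots)\land\llbracket q_2\rrbracket(\cdots)$ --- which for modal $q_i$ follows from the conjunctivity law in Table~\ref{tab:modal} and Proposition~\ref{prop:1}, and for non-modal parts from Table~\ref{tab:ev} --- and again distribute $\forall$ over $\land$. For \eqref{eq:G5} and \eqref{eq:G6}, the hypothesis $p_1\to p_2$ (resp.\ $q_1\to q_2$) means $\llbracket p_1\rrbracket s_0\Rightarrow\llbracket p_2\rrbracket s_0$ for all $s_0$ (resp.\ the analogous implication on transitions, obtained componentwise via Proposition~\ref{prop:1}), so transitivity of $\Rightarrow$ gives the claim, still ranging over the same transition set $\llbracket P\rrbracket_g$.

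The only case needing a genuine semantic input rather than pure logic is \eqref{eq:G7}: enlarging the hypothesis maps from $g$ to $g'$ with $g_l\subseteq g'_l$ can only enlarge the transition set, i.e.\ $\llbracket P\rrbracket_g\subseteq\llbracket P\rrbracket_{g'}$, so a universal statement over the larger set implies the same statement over the smaller one. I expect the monotonicity claim $g\subseteq g'\Rightarrow\llbracket P\rrbracket_g\subseteq\llbracket P\rrbracket_{g'}$ to be the main obstacle, since it is the one place where the structure of the interpretation in Table~\ref{tab:interpretation} matters. It goes by structural induction on $P\in\mathscr C$: the base cases (\code{skip}, \code{return}, \code{break}, \code{goto}~$l$, \code{throw}~$k$, assignment) are independent of $g$ and hence trivial; the compound cases ($P;Q$, $p\to P$, $P\shortmid Q$, \code{do}~$P$, \code{try}/\code{catch}) are built by union, subset-selection, and composition of the recursive models, all of which are monotone in their arguments; the clause $\llbracket P:l\rrbracket_g$ adds $\{s_0\mapsto^{\N}s_1\mid s_1\in g_l\}$, which is itself monotone in $g_l$; and the delicate clause is \code{label}~$l.P$, where the least-fixpoint $g_l^*$ and the subtraction $-\,g_l^*$ appear. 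For that clause one argues that a larger ambient $g$ yields a larger (or equal) fixpoint $g_l^*$ by monotonicity of the fixpoint operator applied to the inductive hypothesis for $P$; the subtraction of $g_l^*$ is then exactly compensated because those subtracted $\G_l$-transitions are precisely the ones re-injected as $\N$-transitions at the label, so no reachable transition of the smaller model is lost in the larger one. I would then remark that \eqref{eq:G5}--\eqref{eq:G7} are the semantic content of the last (strengthening/weakening) rule of Table~\ref{tab:NRBG}, which is why they are collected here.
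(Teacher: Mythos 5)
Your proposal is correct and follows essentially the same route as the paper: unfold Definition~\ref{def:A5}, dispose of \eqref{eq:G1}--\eqref{eq:G6} by elementary propositional reasoning, and reduce \eqref{eq:G7} to monotonicity of $\llbracket P\rrbracket_g$ in the assumption sets $g_l$. The only differences are cosmetic: the paper obtains \eqref{eq:G5}--\eqref{eq:G6} as instances of \eqref{eq:G3}--\eqref{eq:G4} (taking $p_1\lor p_2=p_2$ and $q_1\land q_2=q_1$) rather than by transitivity of implication, and it simply asserts the monotonicity of the interpretation that you flesh out by structural induction.
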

\begin{proof}
\em
(\ref{eq:G1}-\ref{eq:G4}) follow on applying Definition~\ref{def:A5}.
(\ref{eq:G5}-\ref{eq:G6}) follow from (\ref{eq:G3}-\ref{eq:G4})
on considering the cases $p_1\lor p_2 = p_2$ and $q_1\land q_2 = q_1$.
The reason for \eqref{eq:G7} is that $g'_l$ is a bigger set 
than $g_l$, so $\llbracket P \rrbracket_{g'}$ is a bigger set of
transitions than $\llbracket P \rrbracket_g$
and thus the universal
quantifier in Definition~\ref{def:A5} produces a smaller (less
true) truth value.
\hfill\EOP
\end{proof}
\begin{theorem}[Soundness]
The following algebraic inequalities hold, for $\E{}_1$
any of ${\R}$, ${\B}$, ${\G}_l$, ${\e}_k$; $\E_2$ any of ${\R}$, ${\G}_l$, ${\e}_k$;
$\E_3$ any of ${\R}$, ${\B}$, ${\G}_{l'}$ for $l'\ne l$, ${\e}_k$;
$\E_4$ any of ${\R}$, ${\B}$, ${\G}_{l}$, ${\e}_{k'}$ for $k'\ne k$;
$[h]$ the code of the subroutine called $h$:

\begin{small}
\begin{align}
\left.\begin{array}{@{}l@{~}l}
&\llbracket \{p\}\, P\, \{{\N}q\lor \E_{1}x\}\rrbracket_g\\
\land&
\llbracket \{q\}\, Q\, \{{\N}r\lor \E_{1}x\}\rrbracket_g
\end{array}\right\}
&{\implies}
\llbracket \{p\}\, P\,{;}\,Q\,\, \{{\N}r\lor \E_{1}x\}\rrbracket_g
\label{eq:AT9}
\\
\llbracket \{p\}\, P\, \{\B q\lor {\N}p\lor \E_{2}x\}\rrbracket_g
&{\implies}
\llbracket \{p\}\, \code{do} \,\,P\, \{{\N}q\lor \E_{2}x\}\rrbracket_g
\\
\top
&{\implies} \llbracket \{p\}\, \code{skip}\, \{{\N}\,p\}\rrbracket_g
\\
\top
&{\implies} \llbracket \{p\}\, \code{return}\, \{{\R}\,p\}\rrbracket_g
\\
\top
&{\implies} \llbracket \{p\}\, \code{break}\,\,\, \{{\B}\,p\}\rrbracket_g
\\
\top
&{\implies} \llbracket \{p\}\, \code{goto}\,\,l\, \{{\G}_l\,p\}\rrbracket_g
\label{eq:AT14}
\\
\top
&{\implies} \llbracket \{p\}\, \code{throw}\,\,k\, \{{\e}_k\,p\}\rrbracket_g
\label{eq:AT15}
\\
\llbracket \{b\land p\}\, P\, \{q\}\rrbracket_g
&{\implies}
\llbracket \{p\}\, b\,{\to}P\,\, \{q\}\rrbracket_g
\\
\llbracket \{p \}\, P\,\, \{q\}\rrbracket_g
\land
\llbracket \{p \}\, Q\,\, \{q\}\rrbracket_g
&{\implies} 
\llbracket \{p \}\, P\,{\shortmid}\,Q\,\, \{q\}\rrbracket_g
\\
\top
&{\implies} \llbracket \{q[e/x] \}\,\, x{=}e\,\, \{{\N}q\}\rrbracket_g
\\
\llbracket \{p \}~ P~ \{q\}\rrbracket_g
\land
g_l\subseteq \{s_1\mid s_0\mathop{\mapsto}\limits^{\N}s_1\in\llbracket q\rrbracket
\}
&{\implies} \llbracket \{p \}~ P:l~ \{q\}\rrbracket_g
\label{eq:AT18}
\\
\llbracket \{p \}~ P~ \{{\G}_l p_l \lor {\N}q \lor \E_{3}x\}\rrbracket_{g\cup\{l\mapsto p_l\}}
&{\implies} 
\llbracket \{p \}~ \code{label}~l.P~ \{{\N}q\lor \E_{3}x\}\rrbracket_g
\\
\llbracket \{p \}~ [h]~ \{\R r \lor {\e}_k x_k\}\rrbracket_{\{~\}}
&{\implies} 
\llbracket \{p \}~ \code{call}~h~ \{{\N}r\lor {\e}_k x_k\}\rrbracket_{g}
\label{eq:AT20}
\\
\left.\begin{array}{@{}l@{~}l}
&\llbracket \{p \}~ P~ \{{\N}r\lor  {\e}_k q \lor\E_{4} x \}\rrbracket_{g}\\~
\land&
\llbracket \{q \}~ Q~ \{{\N}r\lor {\e}_k x_k\lor \E_{4} x  \}\rrbracket_{g}
\end{array}\right\}
&{\implies} 
\llbracket \{p \}~ \code{try}~P~\code{catch}(k)~Q~ \{{\N}r\lor {\e}_k x_k\lor \E_{4} x \}\rrbracket_{g}
\label{eq:AT21}
\end{align}
\end{small}
\label{thm:T1}
\end{theorem}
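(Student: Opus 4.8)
The plan is to prove the soundness theorem one inequality at a time, each by unfolding Definition~\ref{def:A5} — which reduces every claim $\llbracket\{p\}~P~\{q\}\rrbracket_g$ to the universally quantified statement ``for all $s_0\mathop{\mapsto}\limits^{\iota}s_1\in\llbracket P\rrbracket_g$, if $s_0\models p$ then $\llbracket q\rrbracket(s_0\mathop{\mapsto}\limits^{\iota}s_1)$'' — and then appealing to the corresponding clause of the interpretation in Table~\ref{tab:interpretation} together with the modal-operator semantics of Table~\ref{tab:B*}. For the axioms \eqref{eq:AT14}--\eqref{eq:AT15} and the \code{skip}/\code{return}/\code{break}/\code{let} cases, the transition set is a single explicit family $\{s_0\mathop{\mapsto}\limits^{M}s_0\}$ (or $s_0\mapsto s_0[x\mapsto\llbracket e\rrbracket s_0]$ for assignment), and the postcondition $M\,p$ holds by direct evaluation — for \code{let} one uses the standard substitution fact $\llbracket q[e/x]\rrbracket s_0 = \llbracket q\rrbracket(s_0[x\mapsto\llbracket e\rrbracket s_0])$, which follows from Table~\ref{tab:ev} by induction on $q$. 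The [grd] and [dsj] cases are immediate since $\llbracket b\to P\rrbracket_g\subseteq\llbracket P\rrbracket_g$ and $\llbracket P\shortmid Q\rrbracket_g=\llbracket P\rrbracket_g\cup\llbracket Q\rrbracket_g$.

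For the compound constructs \eqref{eq:AT9} (sequencing) and \eqref{eq:AT21} (try/catch), and the \code{do} case, I would argue by cases on the colour $\iota$ of a transition in the composite set. For [seq]: a transition in $\llbracket P;Q\rrbracket_g$ is either a non-normal transition of $P$ — where the hypothesis on $P$ gives $\N r\lor\E_1 x$, and since $\iota\ne\N$ this forces the $\E_1 x$ disjunct, which is preserved — or it is $s_0\mathop{\mapsto}\limits^{\iota}s_2$ obtained by splicing $s_0\mathop{\mapsto}\limits^{\N}s_1\in\llbracket P\rrbracket_g$ with $s_1\mathop{\mapsto}\limits^{\iota}s_2\in\llbracket Q\rrbracket_g$; the first hypothesis applied to the $P$-transition yields $s_1\models q$ (the $\N q$ disjunct, as $\iota=\N$ there), and then the second hypothesis applied to the $Q$-transition from $s_1$ gives $\N r\lor\E_1 x$ as required. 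The \code{do} case is the same idea but the recursion is handled by noting $\llbracket\code{do}~P\rrbracket_g$ is the least fixpoint, so one does induction on the stage at which a transition enters the set: the base cases are a $\B$-transition of $P$ relabelled $\N$ (giving $\N q$ from the $\B q$ disjunct) or a non-$\N$, non-$\B$ transition of $P$ (giving $\E_2 x$), and the inductive step splices an $\N$-transition of $P$ (giving $s_1\models p$, re-establishing the precondition) with a shorter \code{do}~$P$ transition. The [try] case similarly splits on whether the $P$-transition has colour $\e_k$ or not.

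The genuinely delicate case is the \code{label} rule. Here $\llbracket\code{label}~l.P\rrbracket_g=\llbracket P\rrbracket_{g\cup\{l\mapsto g_l^*\}}-g_l^*$ where $g_l^*$ is the least fixpoint of $g_l^*=\{s_1\mid s_0\mathop{\mapsto}\limits^{\G_l}s_1\in\llbracket P\rrbracket_{g\cup\{l\mapsto g_l^*\}}\}$. The hypothesis is $\llbracket\{p\}~P~\{\G_l p_l\lor\N q\lor\E_3 x\}\rrbracket_{g\cup\{l\mapsto p_l\}}$ with $g_l^{\text{hyp}}=\llbracket p_l\rrbracket$, and I must show the conclusion with the \emph{fixpoint} hypothesis set $g_l^*$ rather than $\llbracket p_l\rrbracket$. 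The key sublemma to establish is $g_l^*\subseteq\llbracket p_l\rrbracket$: every state $s_1$ entering $g_l^*$ is the target of a $\G_l$-transition of $P$ evaluated with the hypothesis set that includes $g_l^*$ itself at $l$; by monotonicity \eqref{eq:G7} of the triple in its $g$-argument, if one can show $g_l^*\subseteq\llbracket p_l\rrbracket$ then the hypothesis on $P$ applies and the $\G_l p_l$ disjunct forces $s_1\models p_l$ — so the containment is itself the least-fixpoint conclusion of a monotone operator bounded above by $\llbracket p_l\rrbracket$, hence holds by fixpoint induction. Once $g_l^*\subseteq\llbracket p_l\rrbracket$ is in hand, \eqref{eq:G7} lets us transport the hypothesis from the $\llbracket p_l\rrbracket$-indexed model to the $g_l^*$-indexed one; then for any transition in $\llbracket P\rrbracket_{g\cup\{l\mapsto g_l^*\}}-g_l^*$, the postcondition $\G_l p_l\lor\N q\lor\E_3 x$ holds, and since such a transition by construction does \emph{not} end in $g_l^*$ — wait, more precisely its colour is not $\G_l$ because those targets are exactly removed — the $\G_l p_l$ disjunct is vacuous, leaving $\N q\lor\E_3 x$. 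The main obstacle is getting this fixpoint-induction argument for $g_l^*\subseteq\llbracket p_l\rrbracket$ cleanly stated, because $g_l^*$ appears on both sides of its own defining equation and one must be careful that the monotone operator whose least fixpoint is $g_l^*$ indeed maps subsets of $\llbracket p_l\rrbracket$ into subsets of $\llbracket p_l\rrbracket$; everything else is routine unfolding. The \code{call} case \eqref{eq:AT20} is comparatively easy: $\llbracket\code{call}~h\rrbracket_g$ recolours the $\R$-transitions of $[h]$ (evaluated with empty $g$) to $\N$ and keeps the $\e_k$-transitions, so the hypothesis $\{\R r\lor\e_k x_k\}$ on $[h]$ transfers directly, and the $g$-independence is immediate since $\code{call}~h$ does not mention the ambient labels. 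Finally \eqref{eq:AT18} for the colon construct $P:l$ is handled by checking that the added transitions $\{s_0\mathop{\mapsto}\limits^{\N}s_1\mid s_1\in g_l\}$ all satisfy $\N q$, which is exactly what the side condition $g_l\subseteq\{s_1\mid s_0\mathop{\mapsto}\limits^{\N}s_1\in\llbracket q\rrbracket\}$ asserts.
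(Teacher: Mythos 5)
Your overall strategy is the same as the paper's: the paper proves Theorem~\ref{thm:T1} with the single line ``by evaluation, given Definition~\ref{def:A5} and the semantics from Table~\ref{tab:interpretation}'', and your unfolding of the triples, the colour case-split for the sequencing and try/catch laws, the stage induction over the least-fixpoint semantics of \code{do}, and the direct checks for the atomic statements, \eqref{eq:AT18} and \eqref{eq:AT20} are exactly the evaluations the paper leaves implicit; those cases are fine.

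The label case, however, does not close as you have set it up, and the gap is precisely at the point you flag. Your sublemma $g_l^*\subseteq\llbracket p_l\rrbracket$ is not obtainable by fixpoint induction from the stated hypothesis: by Definition~\ref{def:A5} the hypothesis $\llbracket\{p\}~P~\{{\G}_l p_l\lor{\N}q\lor\E_3x\}\rrbracket_{g\cup\{l\mapsto p_l\}}$ constrains only transitions whose \emph{initial} state satisfies $p$, whereas the operator whose least fixpoint is $g_l^*$ in Table~\ref{tab:interpretation} collects the targets of \emph{all} ${\G}_l$-coloured transitions of $P$, from arbitrary initial states. So in the induction step, for $X\subseteq\llbracket p_l\rrbracket$ and a transition $s_0\mathop{\mapsto}\limits^{{\G}_l}s_1\in\llbracket P\rrbracket_{g\cup\{l\mapsto X\}}$ with $s_0\not\models p$, nothing forces $s_1\models p_l$, and the preservation property you need (``the operator maps subsets of $\llbracket p_l\rrbracket$ into subsets of $\llbracket p_l\rrbracket$'') fails. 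The failure is not merely hypothetical: take $P$ to be $(x{=}0\to\code{goto}~l\ \shortmid\ x{\ne}0\to\code{skip}):l\,;\code{skip}$ with $p\equiv x\ne0$, $p_l\equiv\bot$, $q\equiv x\ne0$. The hypothesis triple holds (from $p$-states no \code{goto} is executed and the label feeds in $\llbracket\bot\rrbracket=\emptyset$), yet every hypothesis set yields the same ${\G}_l$-transitions $\{s\mathop{\mapsto}\limits^{{\G}_l}s\mid s\,x=0\}$, so $g_l^*=\{s\mid s\,x=0\}\not\subseteq\llbracket\bot\rrbracket$. Consequently your transport step via \eqref{eq:G7} is unavailable, and since $\llbracket P:l\rrbracket$ feeds the whole of $g_l^*$ in after the label from every initial state, whether the conclusion itself survives in such examples turns on how the subtraction ``$-\,g_l^*$'' in the \code{label} clause is read, which the paper does not make precise and its one-line proof does not address. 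To repair your argument you must either restrict the fixpoint to \code{goto}-targets reachable under the precondition (which is not what Table~\ref{tab:interpretation} computes) or carry a stronger hypothesis into this case, e.g.\ a triple over all initial states, which is in effect what the side condition $p\to p_l$ on the [go] rule arranges along any actual derivation in the logic; as written, the fixpoint-induction step is a genuine gap rather than a routine verification.
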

\begin{proof}
By evaluation, given Definition~\ref{def:A5} and the semantics from
Table~\ref{tab:interpretation}.
\hfill\EOP
\end{proof}
The reason why the theorem is titled `Soundness' is that its
inequalities can be read as the NRB logic deduction rules set out in
Table~\ref{tab:NRBG}, via Definition~\ref{def:A5}.
The fixpoint requirement of the model at the
\code{label} construct is expressed in the `arrival from a \code{goto} at a
label' law \eqref{eq:AT18}, where it is stated that {\em if} the hypothesised
states $g_l$ at a \code{goto}~$l$ statement are covered by the states
$q$ immediately after code block $P$ and preceding label $l$,
{\em then} $q$ holds
after the label $l$ too. However, there is no need for any such
predication when the $g_l$ are exactly the fixpoint of the map
\[ g_l\mapsto \{s_1\mid s_0\mathop{\mapsto}\limits^{\G_l}
s_1\in\llbracket P\rrbracket_g\}
\]
because that is what the fixpoint condition says.  Thus, while the model
in Table~\ref{tab:interpretation} satisfies equations
(\ref{eq:AT9}-\ref{eq:AT21}), it satisfies more than they require --
some of the hypotheses in the equations could be dropped and the model
would still satisfy them. But the NRB logic rules in
Table~\ref{tab:NRBG} are validated by the model and thus are sound.

\section{Completeness for deterministic programs}
\label{sec:A2}

In proving completeness of the NRB logic, at least for deterministic
programs,  we will be guided by the proof of partial completeness for
Hoare's logic in
K.~R.~Apt's survey paper \cite{KRAPT}.  We will need, for
every (possibly modal) postcondition $q \in \mathscr B^*$ and every
construct $R$ of $\mathscr C$, a non-modal formula $p\in
\mathscr B$ that is weakest in $\mathscr B$ such that if $p$ holds of a
state $s$, and $s\mathop{\mapsto}\limits^\iota s'$ is in the model of
$R$ given in Table~\ref{tab:interpretation}, then $q$ holds of
$s\mathop{\mapsto}\limits^\iota s'$.  This $p$ is written
$\mbox{wp}(R,q)$, the `weakest precondition on $R$ for $q$'.  We
construct it via structural induction on $\mathscr C$ at the same
time as we deduce completeness, so
there is an element of chicken versus egg about the proof, and we will
not labour that point.

We will also suppose that we can prove any tautology of $\mathscr B$
and $\mathscr B^*$, so `completeness of NRB' will be relative to that
lower-level completeness.

Notice that there is always a set $p\in \mathds{P}S$ satisfying
the `weakest precondition' characterisation above.  It is $\{ s\in S\mid
s\mathop{\mapsto}\limits^\iota s'\in\llbracket R\rrbracket_g \Rightarrow
s\mathop{\mapsto}\limits^\iota s' \in \llbracket q\rrbracket \}$, and
it is called the weakest {\em semantic} precondition on $R$ for $q$.
So we sometimes refer to $\text{wp}(R,q)$  as the `weakest {\em syntactic}
precondition' on $R$ for $q$, when we wish to emphasise the distinction.
The question is whether or not there is a formula in $\mathscr B$ that
exactly expresses this set.  If there is, then the system is said to be
{\em expressive}, and that formula {\em is} the weakest (syntactic)
precondition on $R$ for $q$, $\text{wp}(R,q)$.  Notice also that a
weakest (syntactic) precondition $\text{wp}(R,q)$ must encompass the
semantic weakest precondition; that is because if there were a state $s$
in the latter and not in the former, then we could form the disjunction
$\text{wp}(R,q)\lor(x_1=s x_1\land\dots x_n=s x_n)$ where the $x_i$ are
the variables of $s$, and this would also be a precondition on $R$ for
$q$, hence $x_1=s x_1\land\dots x_n=s x_n \to \text{wp}(R,q)$ must be
true, as the latter is supposedly the weakest precondition, and so $s$
satisfies $\text{wp}(R,q)$ in contradiction to the assumption that $s$
is not in $\text{wp}(R,q)$. For orientation, then, the reader should note
that `there is a weakest (syntactic) precondition in
$\mathscr B$' means there is a unique strongest formula in
$\mathscr B$ covering the weakest semantic precondition.

We will lay out the proof of completeness inline here, in order to
avoid excessively overbearing formality, and at the end we will draw the
formal conclusion. 

A completeness proof is always a proof by cases on each construct of
interest.  It has the form `suppose that {\em foo} is true, then we can
prove it like this', where \emph{foo} runs through all the constructs we
are interested in.  We start with assertions
about the sequence construction $P;Q$.  We will look at this in
particular detail, noting where and how the weakest precondition formula
plays a role, and skip that detail for most other cases.
Thus we start with \emph{foo} equal to $\G_l\,
g_l~\triangleright~\{p\}~P;Q~\{q\}$ for some assumptions $g_l\in
\mathscr{B}$, but we do not need to take the assumptions $g_l$ into
account in this case.

\medskip

\begin{em}
\paragraph{Case $P;Q$}.
Consider a sequence of two statements $P;Q$ for which
$\{p\}~P;Q~\{q\}$ holds in the model set out by Definition~\ref{def:A5}
and Table~\ref{tab:interpretation}.  That is, suppose that initially the
state $s$ satisfies predicate $p$ and that there is a progression from
$s$ to some final state $s'$ through $P;Q$.  Then
$s\mathop{\mapsto}\limits^\iota s'$ is in $\llbracket P;Q\rrbracket_g$
and $s\mathop{\mapsto}\limits^\iota s'$ satisfies $q$.  We will consider
two subcases, the first where $P$ terminates normally from $s$, and the
second where $P$ terminates abnormally from $s$.  A third possibility,
that $P$ does not terminate at all, is ruled out because a final state
$s'$ is reached.

Consider the first subcase, which means that we think of $s$
as confined to $\mbox{wp}(P,\N\top)$.
According to 
Table~\ref{tab:interpretation}, that means that $P$ started in state
$s_0=s$ and finished normally in some state $s_1$
and $Q$ ran on from state $s_1$ to finish normally in state $s_2=s'$.
Let $r$ stand for the weakest precondition $\mbox{wp}(Q,\N q)$ that
guarantees a normal termination of $Q$ with $q$ holding.
By definition of weakest precondition, $\{r\}~Q~\{\N q\}$,
is true and $s_1$ satisfies $r$ (if not, then $r\lor (x_1=s x_1\land
x_2= s x_2 \land \dots)$ would be a weaker precondition for $\N q$ than
$r$, which is impossible). The latter is true whatever
$s_0$ satisfying $p$ and $\mbox{wp}(P,\N\top)$ we started with, so
by definition of weakest precondition, $p\land \mbox{wp}(P,\N\top)\to
\mbox{wp}(P,\N r)$ must be true, which is to say that $\{p\land
\mbox{wp}(P,\N\top)\}~P~\{\N r\}$ is true.

By induction, it is the case that there are deductions
$\vdash \{p\land \mbox{wp}(P,\N\top)\}~P~\{\N r\}$ and $\vdash
\{r\}~Q~\{\N q\}$ in the NRB system.
But the following rule
\[
\frac{
               \{p\land \mbox{wp}(P,\N\top)\}~P~\{\N r\} \quad  \{r\}~Q~\{\N q\}
}{
               \{p\land \mbox{wp}(P,\N\top)\}~P;Q~\{\N q\}
}
\]
is a derived rule of NRB logic. It is a specialised form of the
general NRB rule of sequence.  Putting these deductions together,
we have a deduction of the truth of the assertions
$ \{p\land \mbox{wp}(P,\N\top)\}~P;Q~\{\N q\}$.
By weakening on the conclusion, since $\N q\to q$ is (always) true,
we have a deduction of $ \{p\land \mbox{wp}(P,\N\top)\}~P;Q~\{q\}$.

Now consider the second subcase, when the final state $s_1$ reached from
$s=s_0$
through $P$ obtains via an abnormal flow out of $P$. This means
that we think of $s$ as confined to $\mbox{wp}(P,\lnot\N\top)$.
Now the transition $s_0\mathop{\mapsto}\limits^\iota s_1$ in 
$\llbracket P\rrbracket_g$ satisfies $q$, and $s$ is arbitrary in 
$p\land\mbox{wp}(P,\lnot\N\top)$, so
$\{p\land\mbox{wp}(P,\lnot\N\top)\}~P~\{q\}$.
However, `not ending normally' (and getting to a termination, which is
the case here) means `ending abnormally', i.e.,
$\R\top\lor\B\top\lor\dots$ through all of the available colours, as per
Proposition~\ref{prop:1}, and we may write the assertion out as
$\{p\land\mbox{wp}(P,\R\top\lor\B\top\dots)\}~P~\{q\}$. 
Considering the cases separately, one has
$\{p\land\mbox{wp}(P,\R\top)\}~P~\{\R q\}$ (since $\R q$ is the
component of $q$ that expects an $\R$-coloured transition), and
$\{p\land\mbox{wp}(P,\B\top)\}~P~\{\B q\}$, and so on, all holding.
By induction, there are deductions
$\vdash \{p\land\mbox{wp}(P,\R\top)\}~P~\{\R q\}$,
$\vdash \{p\land\mbox{wp}(P,\B\top)\}~P~\{\B q\}$,
etc.  But the following rule
\[
\frac{
 \{p \land\mbox{wp}(P,\E\top)\}~P~\{\E q\}
}{
\{p \land\mbox{wp}(P,\E\top)\}~P;Q~\{\E q\}
}
\]
is a derived rule of NRB logic for each `abnormal' colouring $\E$, and
hence we have a deduction $\vdash \{p
\land\mbox{wp}(P,\E\top)\}~P;Q~\{\E q\}$ for each of the `abnormal'
colours $\E$.  By weakening on the conclusion, since $\E q\to q$, for
each of the colours $\E$, we have a deduction $\vdash \{p
\land\mbox{wp}(P,\E\top)\}~P;Q~\{q\}$ for each of the colours $\E$.

By the rule on disjunctive hypotheses (fourth
from last in Table~\ref{tab:NRBG}) we now have a deduction $\vdash \{p\land
(\mbox{wp}(P,\N\top)\lor\mbox{wp}(P,\R\top)\lor\dots)\}~P;Q~\{q\}$.
But the weakest precondition is monotonic, so 
$\mbox{wp}(P,\N\top)\lor\mbox{wp}(P,\R\top)\lor\dots$ is covered by
$\mbox{wp}(P,\N\top\lor\R\top\lor\dots)$, which is $\mbox{wp}(P,\top)$
by  Proposition~\ref{prop:1}. But for a deterministic program $P$, the
outcome from a single starting state $s$ can only be uniquely a normal
termination, or uniquely a return termination, etc, and
$\mbox{wp}(P,\N\top)\lor\mbox{wp}(P,\R\top)\lor\dots =
\mbox{wp}(P,\N\top\lor\R\top\lor\dots) = \mbox{wp}(P,\top)$ exactly.
The latter is just $\top$, so we have
a proof $\vdash \{p\}P;Q~\{q\}$.
As to what the weakest precondition $\mbox{wp}(P;Q,q)$ is, it is
$\mbox{wp}(P,\N \mbox{wp}(Q,q))\lor
\mbox{wp}(P,\R q)\lor \mbox{wp}(P, \B q)\lor \dots$, the disjunction
being over all the possible colours.
\end{em}

\medskip
That concludes the consideration of the case $P;Q$.
The existence of a
formula expressing a weakest precondition is what really drives the
proof above along, and in lieu of pursuing the proof through all the
other construct cases,  we note the important weakest precondition formulae
below:
\begin{itemize}
\item
The weakest precondition for assignment is
$\mbox{wp}(x=e,\N q) = q[e/x]$ for $q$ without modal components. In
general $\mbox{wp}(x=e, q) = \N q[e/x]$.
\item 
The weakest precondition for a \code{return} statement is
$\mbox{wp}(\code{return},q) = \R q$.
\item 
The weakest precondition for a \code{break} statement is
$\mbox{wp}(\code{break},q) = \B q$. Etc.
\item
The weakest precondition $\mbox{wp}(\code{do}~P,\N q)$ for a \code{do} loop
that ends `normally' is 
$\code{wp}(P,\B q) \lor \code{wp}(P,\N \code{wp}(P,\B q)) \lor
\code{wp}(P,\N \code{wp}(P,\N \code{wp}(P,\B q))) \lor \dots$.
That is, we might break from $P$ with $q$, or run through $P$
normally to the precondition for breaking from $P$ with $q$ next, etc.
Write $\code{wp}(P,\B q)$ as $p$ and write $\code{wp}(P,\N r)\land
\lnot p$ as $\psi(r)$, Then $\mbox{wp}(\code{do}~P,\N q)$ can be
written $p \lor \psi(p) \lor \psi(p\lor \psi(p)) \lor \dots$, which is
the strongest solution to $\pi = \psi(\pi)$ no stronger than $p$.
This is the weakest precondition for $p$ after $\code{while} (\lnot p)~
P$ in classical Hoare logic. 
It is an existentially quantified statement, stating that an initial
state $s$ gives rise to exactly some $n$ passes through $P$ before the
condition $p$ becomes true for the first time.  It can classically
be expressed as a formula of first-order logic and it is the weakest
precondition for $\N q$ after $\code{do}~P$ here.

The preconditions for $\E q$ for each `abnormal' coloured ending $\E$ of
the loop $\code{do}~P$ are similarly expressible in $\mathscr B$, and
the precondition for $q$ is the disjunction of each of the preconditions for
$\N q$, $\R q$, $\B q$, etc.
\item
The weakest precondition for a guarded statement $\mbox{wp}(p\to
P,q)$ is $p\to \mbox{wp}(P,q)$, as in Hoare logic; and
the weakest precondition for a disjunction $\mbox{wp}(P\shortmid Q, q)$ is
$\mbox{wp}(P,q) \land \mbox{wp}(Q,q)$, as in Hoare logic. However, we
only use the deterministic combination $p\to P\shortmid \lnot p\to Q$
for which the weakest precondition is
$(p\to\mbox{wp}(P,q))\land(\lnot p\to\mbox{wp}(Q,q))$, i.e.
$p\land\mbox{wp}(P,q)\lor \lnot p\land\mbox{wp}(Q,q)$.
\end{itemize}
To deal with  labels properly, we have to extend some of these notions
and notations to take account of the assumptions $\G_l g_l$ that
an assertion $\G_l g_l~\triangleright~\{ p \}~P~\{ q \}$ is made against.
The weakest precondition $p$ on $P$ for $q$ is then $p = \mbox{wp}_g(P,q)$,
with the $g_l$ as extra parameters. The weakest precondition for a
label use $\mbox{wp}_g(P :l,q)$ is then 
$\mbox{wp}_g(P,q)$, provided that $g_l\to q$,
since the states $g_l$
attained by $\code{goto}~l$ statements throughout the code are
available after the label, as well as those obtained through $P$.
The weakest precondition in the general situation where it is not
necessarily the case that $g_l\to q$ holds is
$\mbox{wp}_g(P,q\land (g_l\to q))$, which is 
$\mbox{wp}_g(P,q)$.

Now we can continue the completeness proof through the statements of the
form $P:l$ (a labelled statement) and $\code{label}~l.P$ (a label
declaration).

\medskip

\begin{em}
\paragraph{Case labelled statement}.
If $\llbracket \{p\}~P:l~\{q\}\rrbracket_g$ holds,
then every state $s = s_0$ satisfying $p$ leads through $P$ with
$s_0\mathop{\mapsto}\limits^\iota s_1$ satisfying $q$, and also $q$
must contain all the transitions 
$s_0\mathop{\mapsto}\limits^{\N} s_1$ where $s_1$ satisfies
$g_l$. Thus $s$ satisfies $\mbox{wp}_g(P,q)$ and $\N g_l\to q$ holds.
Since $s$ is arbitrary in
$p$, so $p\to \mbox{wp}_g(P,q)$ holds and by induction, $\vdash \G_l g_l~\triangleright~\{p\} ~P~\{q\}$.
Then, by the `frm' rule of NRB (Table~\ref{tab:NRBG}), we may deduce
$\vdash  \G_l g_l~\triangleright~\{p\} ~P:l~\{q\}$.
\end{em}


\medskip

\begin{em}
\paragraph{Case label declaration}.
The weakest precondition for a declaration
$\mbox{wp}_g(\code{label}\,l.P,q)$ is simply $p = \mbox{wp}_{g'}(P,q)$,
where the assumptions after the declaration are
$g' = g \cup \{l\mapsto g_l\}$
and $g_l$ is such that
$\G_l g_l \triangleright \{ p \}~P~\{q\}$.
In other words, $p$ and $g_l$ are
simultaneously chosen to make
the assertion hold, $p$ maximal and $g_l$ the least fixpoint describing
the states at $\code{goto}~l$
statements in the code $P$, given that the initial state satisfies $p$
and assumptions $\G_l g_l$ hold.
The $g_l$y are the statements that after exactly some $n\in\mathds{N}$
more traversals through $P$ via $\code{goto}~l$, the trace from state $s$
will avoid another $\code{goto}~l$ for the first time and exit $P$
normally or via an abnormal exit that is not a $\code{goto}~l$.

If it is the case that $\llbracket \{p\}~\code{label}~l.P~\{q\}\rrbracket_g$
holds then every state $s=s_0$ satisfying  $p$ leads through
$\code{label}~l.P$ with
$s_0\mathop{\mapsto}\limits^\iota s_1$ satisfying $q$. That means that
$s_0\mathop{\mapsto}\limits^\iota s_1$ leads through $P$, but it is not
all that do; there are extra transitions with $\iota = \G_l$ that are
not considered.  The `missing' transitions
are precisely the $\G_l g_l$ where $g_l$ is the appropriate least
fixpoint for $g_l = \{s_1\mid s_0\mathop{\mapsto}\limits^{\G_l}s_1
\in \llbracket P\rrbracket_{g\cup\{l\mapsto g_l\}}$, which is a
predicate expressing the idea that $s_1$ at a $\code{goto}~l$ 
initiates some exactly $n$ traversals back through $P$ again before
exiting $P$ for a first time other than via a $\code{goto}~l$.
The predicate $q$ cannot mention $\G_l$ since the label $l$
is out of scope for it, but it may permit some, all or no
$\G_l$-coloured transitions. The predicate $q\lor \G_l g_l$, on the
other hand, permits all the $\G_l$-coloured transitions that exit $P$.
transitions. Thus adding $\G_l g_l$ to the assumptions means
that $s_0$ traverses $P$ via $s_0\mathop{\mapsto}\limits^\iota s_1$
satisfying $q\lor \G_l g_l$ even though more transitions are admitted.
Since $s=s_0$ is arbitrary in
$p$, so $p\to \mbox{wp}_{g\cup\{l\mapsto g_l\}}(P,q\lor \G_l g_l)$
and by induction
$\vdash \G_l~\triangleright~\{p\}~P~\{q\lor \G_l g_l\}$, and then one
may deduce
$\vdash \{p\}~\code{label}~l.P~\{q\}$ by  the `lbl' rule.
\end{em}
\hfill\EOP

\medskip

That concludes the text that would appear in a proof,
but which we have abridged and presented as a discussion here! We have
covered the typical case ($P;Q$) and the unusual cases ($P:l$,
$\code{label}~l.P$).
The proof-theoretic content of the discussion is:
\begin{theorem}[Completeness]
The system of NRB logic in Table~\ref{tab:NRBG} is complete for
deterministic programs, relative to the completeness of first-order logic.
\end{theorem}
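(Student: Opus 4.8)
The plan is to prove \emph{relative completeness} in Cook's sense, following the treatment of Hoare logic in Apt's survey \cite{KRAPT}: for every construct $R\in\mathscr C$, every context of \code{goto} assumptions $\G_l g_l$, and every postcondition $q\in\mathscr B^*$, I exhibit a non-modal formula $\mbox{wp}_g(R,q)\in\mathscr B$ capturing the weakest \emph{semantic} precondition on $R$ for $q$ under those assumptions, and I show by structural induction on $R$ that $\vdash \G_l g_l \triangleright \{\mbox{wp}_g(R,q)\}~R~\{q\}$. Granting these two ingredients, the theorem is immediate: if $\llbracket\{p\}~R~\{q\}\rrbracket_g$ holds then $p\to\mbox{wp}_g(R,q)$ is valid in first-order logic, hence provable by the assumed relative completeness, and the precondition-weakening rule at the foot of Table~\ref{tab:NRBG} turns $\vdash \G_l g_l\triangleright\{\mbox{wp}_g(R,q)\}~R~\{q\}$ into $\vdash \G_l g_l\triangleright\{p\}~R~\{q\}$. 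So everything rests on (i) \emph{expressiveness}, that each $\mbox{wp}_g(R,q)$ is definable in $\mathscr B$, and (ii) the inductive provability of the triple carrying the weakest precondition.

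For (ii) I would run the case analysis already begun in the text. The atomic cases --- \code{skip}, \code{return}, \code{break}, \code{goto}~$l$, \code{throw}~$k$ and $x{=}e$ --- are immediate from the matching axioms together with the weakest preconditions listed in the excerpt. Each compound case (sequence $P;Q$, guard $b{\to}P$, choice $P\shortmid Q$, $\code{do}~P$, $P{:}l$, $\code{label}~l.P$, $\code{call}~h$, $\code{try}~P~\code{catch}(k)~Q$) is handled uniformly: first decompose the postcondition $q$ into its modal components via Proposition~\ref{prop:1} and treat each colour separately; apply the induction hypothesis to the subprograms with the appropriate weakest preconditions (for $\code{call}~h$, to the body $[h]$ under the empty context, as dictated by its model in Table~\ref{tab:interpretation}); chain the resulting triples with the relevant NRB rule, using the specialised derived rules of the kind displayed in the $P;Q$ discussion; then weaken each colour-conclusion and recombine them with the disjunctive-hypothesis rule. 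Determinism is used in exactly one way, but that use is essential: it is what upgrades $\mbox{wp}(P,\N\top)\lor\mbox{wp}(P,\R\top)\lor\dots$ from being merely contained in $\mbox{wp}(P,\top)=\top$ to being \emph{equal} to it, so that the precondition collected across the colour-cases in the $P;Q$ argument --- and the same move recurs in the $\code{do}~P$ and $\code{try}$ cases --- simplifies to $\top$ and the originally given $p$ is recovered by weakening. Without determinism a program could hang or branch, and the disjunction of colour-preconditions would be strictly weaker than $\top$.

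The real obstacle is expressiveness (i), and within it the two genuinely recursive constructs. For $\code{do}~P$ the weakest precondition for normal exit is the strongest solution $\pi$ of $\pi = \psi(\pi)$ lying below $p=\mbox{wp}(P,\B q)$, where $\psi(r)=\mbox{wp}(P,\N r)\land\lnot p$; this is the classical ``exactly $n$ iterations'' predicate, whose first-order definability is the usual coding of finite state-sequences over the integer variables of $S$ (a G\"odel $\beta$-function argument), with the other exit colours obtained by replacing $\B$ by $\R$, $\G_{l}$ or $\e_k$ and the full precondition being their disjunction. For $\code{label}~l.P$ the weakest precondition is $\mbox{wp}_{g\cup\{l\mapsto g_l\}}(P,\, q\lor\G_l g_l)$, where $g_l$ must be chosen \emph{simultaneously} as the least fixpoint of $g_l\mapsto\{s_1\mid s_0\mathop{\mapsto}\limits^{\iota}s_1\in\llbracket P\rrbracket_{g\cup\{l\mapsto g_l\}},\ \iota=\G_l\}$; one has to argue this mutual fixpoint is again first-order expressible --- it encodes ``from $s_1$, control re-enters $P$ via $\code{goto}~l$ exactly finitely many times before leaving $P$ some other way'', structurally the same bounded-iteration pattern as $\code{do}~P$, so the same coding applies --- and that, with $g_l$ so chosen, the \code{frm} and \code{lbl} rules (equivalently the label law \eqref{eq:AT18}) discharge the \code{goto} hypothesis. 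Once expressiveness is secured for every construct, the induction of (ii) closes and the completeness statement follows relative to first-order logic.
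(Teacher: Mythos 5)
Your proposal is correct and follows essentially the same route as the paper: a Cook/Apt-style relative completeness argument resting on expressiveness of weakest (liberal) preconditions in $\mathscr B$, structural induction over $\mathscr C$ with the colour decomposition of Proposition~\ref{prop:1}, derived sequencing rules plus the disjunctive-hypothesis and weakening rules, determinism used exactly where the paper uses it (to make the disjunction of colour-preconditions equal $\top$), and the same fixpoint/``exactly $n$ iterations'' first-order coding for $\code{do}~P$ and $\code{label}~l.P$. The only difference is presentational --- you prove the wp-triple first and then weaken, whereas the paper argues each semantically valid triple directly --- which does not change the substance.
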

We do not know if the result holds for non-deterministic programs too,
but it seems probable. A different proof technique would be needed 
(likely showing that attempting to construct a proof backwards
either succeeds or yields a counter-model).

Along with that we note
\begin{theorem}[Expressiveness]
The weakest precondition $\mbox{wp}(P,q)$ for $q\in \mathscr B^*$,
$P\in \mathscr C$ in the interpretation set out in
Definition~\ref{def:A5} and Table~\ref{tab:interpretation}
is expressible in $\mathscr B$.
\end{theorem}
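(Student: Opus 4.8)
The plan is a structural induction on the program $P\in\mathscr C$ that, for every postcondition $q\in\mathscr B^*$, exhibits an explicit $\mathscr B$-formula realising the weakest (syntactic) precondition $\mbox{wp}(P,q)$; most of the needed formulae have already been written down in the completeness discussion above, and here we only have to check that each one stays within $\mathscr B$ and really is weakest. We always begin by putting $q$ into the canonical form $q=\N q_{\N}\lor\R q_{\R}\lor\B q_{\B}\lor\lor\kern-6pt\lor\G_l q_{\G_l}\lor\lor\kern-6pt\lor\e_k q_{\e_k}$ of Proposition~\ref{prop:1}, with all the $q_M$ non-modal. For the atomic statements the model of Table~\ref{tab:interpretation} emits transitions of a single colour, so $\mbox{wp}$ is just the matching component of $q$: $\mbox{wp}(\code{skip},q)=q_{\N}$, $\mbox{wp}(\code{return},q)=q_{\R}$, $\mbox{wp}(\code{break},q)=q_{\B}$, $\mbox{wp}(\code{goto}~l,q)=q_{\G_l}$, $\mbox{wp}(\code{throw}~k,q)=q_{\e_k}$, and $\mbox{wp}(x{=}e,q)=q_{\N}[e/x]$. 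All of these are visibly in $\mathscr B$, since $\mathscr B$ is closed under substitution of a term for a variable.

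For the compound constructs that introduce no fixpoint, the induction hypothesis already supplies $\mathscr B$-formulae for the weakest preconditions of the immediate subterms, and the required formula is got from these by propositional connectives and the construct-specific recipes noted above: $\mbox{wp}(p{\to}P,q)=p\to\mbox{wp}(P,q)$; $\mbox{wp}(P\,{\shortmid}\,Q,q)=\mbox{wp}(P,q)\land\mbox{wp}(Q,q)$; $\mbox{wp}(P;Q,q)=\mbox{wp}(P,\N\,\mbox{wp}(Q,q))\lor\mbox{wp}(P,\R q_{\R})\lor\mbox{wp}(P,\B q_{\B})\lor\dots$ over the remaining colours; for $\code{try}~P~\code{catch}(k)~Q$ one takes $\mbox{wp}_g(P,\,q'\lor\e_k\,\mbox{wp}_g(Q,q))$, where $q'$ is $q$ with its $\e_k$-component deleted; for $\code{call}~h$ one relays the $\N$- and $\e_k$-components of $q$ onto the $\R$- and $\e_k$-exits of $[h]$, leaving the discarded colours of $[h]$ unconstrained; and for $P{:}l$ one takes $\mbox{wp}_g(P,q\land(\N g_l{\to}q))$. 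Because $\mathscr B$ is closed under $\land,\lor,\lnot,\to$ and substitution, each result is again in $\mathscr B$. That these formulae are \emph{weakest} follows from the semantic-weakest-precondition observation made earlier in the section — adjoin to any competing precondition the point-describing disjunct $x_1{=}s\,x_1\land\dots\land x_n{=}s\,x_n$ — together with monotonicity of $\mbox{wp}$ in the postcondition and, in the sequence case, the determinism identity $\mbox{wp}(P,\top)=\top$, both of which are recorded above.

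The two cases that do real work, and where the main obstacle lies, are $\code{do}~P$ and $\code{label}~l.P$, each defined by a least fixpoint over $\mathds{P}S$. For the loop, writing $p=\mbox{wp}(P,\B q_{\B})$ and $\psi(r)=\mbox{wp}(P,\N r)\land\lnot p$, the ``normal'' precondition is the strongest solution no stronger than $p$ of $\pi=\psi(\pi)$, whose unfolding $p\lor\psi(p)\lor\psi(p\lor\psi(p))\lor\dots$ expresses ``there is an $n\in\mathds N$ such that the initial state produces exactly $n$ normal passes through $P$ before $p$ first holds''; the abnormal-exit preconditions are obtained the same way, and $\mbox{wp}(\code{do}~P,q)$ is their disjunction over the colours the loop can actually end with. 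For the declaration, $g_l^*$ is the least fixpoint of $g_l\mapsto\{s_1\mid s_0\mathop{\mapsto}\limits^{\G_l}s_1\in\llbracket P\rrbracket_{g\cup\{l\mapsto g_l\}}\}$, which unfolds to ``after exactly $n$ re-entries at $l$ the trace leaves $P$ for the first time other than through a $\code{goto}~l$'', and $\mbox{wp}_g(\code{label}~l.P,q)=\mbox{wp}_{g\cup\{l\mapsto g_l^*\}}(P,\,q\lor\G_l g_l^*)$. In both cases the content is the classical Hoare-logic expressiveness step: the least fixpoint of a monotone, first-order-definable operator is itself first-order definable, via an arithmetical coding of the iteration count $n$ and the intermediate states. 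This is exactly where, as with the Completeness theorem, the statement must be read ``relative'' to the expressive power of the assertion language — one assumes $\mathscr B$, suitably enriched if necessary, can code finite sequences of states — and this is the step I expect to require care; everything else is bookkeeping with the closure properties of $\mathscr B$ and the transition semantics. A secondary wrinkle is $\code{call}~h$ when $[h]$ invokes $h$ itself: structural induction on $\mathscr C$ does not reach $[h]$ as a subterm, so one either restricts to non-recursive subroutines or replays the same least-fixpoint argument over recursion depth.
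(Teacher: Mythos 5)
Your proposal follows essentially the same route as the paper: the paper's justification of Expressiveness is precisely the construct-by-construct exhibition of weakest-precondition formulae made during the completeness discussion (the bulleted list of $\mbox{wp}$ recipes plus the labelled-statement and label-declaration cases), with the \code{do}-loop and \code{goto} fixpoints dispatched by the ``exactly $n$ iterations'' first-order coding that you also invoke. Your explicit caveats --- that this coding presupposes $\mathscr B$ (as given, piecewise-linear integer arithmetic) is rich enough to express such least fixpoints, and that \code{call} of a recursive subroutine is not reached by structural induction on $\mathscr C$ --- concern points the paper silently glosses over, so they sharpen rather than change the argument.
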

The observation
above is that there is a formula in $\mathscr B$ that expresses the
semantic weakest precondition exactly.

\section{Summary}

We have proven the NRB logic sound with respect to a simple
transition-based model of programs, and showed that it is complete for
deterministic programs.

\noindent
\bibliographystyle{plain}
\bibliography{NRB}

\end{document}